\documentclass[journal]{IEEEtran}

\usepackage{amsmath}
\usepackage{amsfonts}
\usepackage{amssymb}
\usepackage{amsthm}

\newtheorem{mylemma}{Lemma}

\newtheorem{mytheorem}{Theorem}
\usepackage{graphics}
\usepackage{graphicx}

\usepackage{comment}
\usepackage{cite}
\usepackage{newtxmath}
\usepackage{newtxtext}
\usepackage{bm} 
\usepackage{enumerate}
\usepackage[bookmarks,colorlinks]{hyperref}

\usepackage{clipboard}
\newclipboard{myclipboard}

\def\A{\bm{A}}
\def\B{\bm{B}}

\def\T{\bm{T}}
\def\D{\bm{D}}

\def\X{\bm{X}}
\def\Y{\bm{Y}}
\def\x{\bm{x}}
\def\y{\bm{y}}
\def\z{\bm{z}}

\def\c{\bm{c}}
\def\E{\bm{E}}
\def\G{\bm{G}}
\def\v{\bm{v}}
\def\U{\bm{U}}

\usepackage{algorithm}  
\usepackage{algorithmicx} 
\usepackage{algpseudocode} 

\usepackage{color}
\usepackage{soul}

\begin{document}

    \title{Unambiguous Delay-Doppler Recovery from Random Phase Coded Pulses}

    \author{ Xiang Liu, Deborah Cohen, Tianyao Huang, Yimin Liu, Yonina C. Eldar
        
        \thanks{This project is funded by the National Natural Science Foundation of China under Grants No. 61801258,  European Union’s Horizon 2020 research and innovation program under grant agreement No. 646804-ERC-COG-BNYQ, from the Air Force Office of Scientific Research under grant No. FA9550-18-1-0208. Deborah Cohen is grateful to the Azrieli Foundation for the award of an Azrieli Fellowship.}
        
        \thanks{Y. C. Eldar is with the  Faculty of Mathematics and Computer Science, Weizmann Institute of Science, Rehovot,  Israel (e-mail: yonina.eldar@weizmann.ac.il).}
        
        
        \thanks{D. Cohen was with the Faculty of Electrical Engineering, Technion - Israel Institute of Technology, Haifa, Israel, when performing part of this work, and is now with Google Research, Tel Aviv, Israel. }
        
        \thanks{T. Huang, Y. Liu, and X. Liu are with the  Department of  Electronic Engineering, Tsinghua University, Beijing, China.}
        
    }

    \maketitle
    
    \begin{abstract}
        Pulse Doppler radars suffer from range-Doppler ambiguity that translates into a trade-off between maximal unambiguous range and velocity.
        Several techniques, like the multiple PRFs (MPRF) method, have been proposed to mitigate this problem. 
        The drawback of the MPRF method is that the received samples are not processed jointly, decreasing signal to noise ratio (SNR).
        To overcome the drawbacks of MPRF, we employ a random pulse phase coding approach to increase the unambiguous range region while preserving the unambiguous Doppler region.
        Our method encodes each pulse with a random phase, varying from pulse to pulse, and then processes the received samples jointly to resolve the range ambiguity.
        This technique increases the SNR through joint processing without the parameter matching procedures required in  MPRF.
        The recovery algorithm is designed based on orthogonal matching pursuit so that it can be directly applied to either Nyquist or sub-Nyquist samples.
        The unambiguous delay-Doppler recovery condition is derived using compressed sensing theory in noiseless settings. 
        In particular, an upper bound on the number of targets is given, with respect to the number of samples in each pulse repetition interval and the number of transmit pulses.
        Simulations show that in both regimes of Nyquist and sub-Nyquist samples our method outperforms the popular MPRF approach in terms of hit rate.
    \end{abstract}

    \IEEEpeerreviewmaketitle

    \section{Introduction} \label{section1}
    Pulse Doppler radars, which simultaneously estimate targets' range and velocity, are widely used for both civilian and military purposes, including meteorological applications \cite{doppler2014, polarimetric2014}, surveillance and tracking systems \cite{599236}.
    However, such systems suffer from the so-called ``range-Doppler ambiguity dilemma'' \cite{doppler2014, polarimetric2014}.
    For a certain pulse repetition interval (PRI) $T_r$, the maximum unambiguous range is $R_{\max} = c T_r / 2$, where $c$ is the propagation velocity, and the maximum unambiguous velocity is $V_{\max} = \lambda / (4 T_r)$, where $\lambda$ is the radar wavelength.
    This fundamental problem creates a trade-off between range and velocity ambiguity and limits their product to $R _ {\max} V _ {\max} = c \lambda / 8$.
    
    Several techniques have been proposed over the years to mitigate this problem by increasing either the unambiguous velocity region or the unambiguous range region.
    A first approach uses carrier frequency variation and transmits pulses with different carriers.
    The velocity ambiguous region is increased by exploiting phase differences between pairs of reflected pulses \cite{7485194}.
    However, it is not clear how to compute the phase differences in the presence of more than one target.
    This technique suffers from additional issues, including radar cross section (RCS) variation under different carriers and large frequency excursion requirement \cite{ludloff1981doppler}.
    Therefore, methods based on pulse repetition frequency (PRF) variation are generally preferred \cite{599236}, where $\mathrm{PRF} = 1 / {T_r}$.
    
    Two main PRF variation based techniques are staggered PRFs and multiple PRFs (MPRF).
    The use of staggered PRFs has been essentially proposed to raise the first blind speed $V_{\max}$ significantly without degrading unambiguous range \cite{richards2005fundamentals}.
    Pulse-to-pulse stagger varies the PRF from one pulse to the next pulse, achieving increased Doppler coverage \cite{4101740, 407142}.
    The main disadvantage of this approach is that the data corresponds to a non-uniformly sampled sequence, making it more difficult to apply coherent Doppler filtering \cite{richards2005fundamentals}. 
    In addition, clutter cancellation also becomes more challenging and the sensitivity to noise increases \cite{599236, 7485194}.
    
    The MPRF approach transmits several pulse trains, each with a different PRF.
    Ambiguity resolution is typically achieved by searching for coincidence between unfolded Doppler or delay estimations for each PRF.
    A popular approach, adopted in \cite{4104095}, relies on the Chinese Remainder Theorem \cite{skolnik2008radar} and uses two PRFs, such that the numerator and denominator of the ratios between these are prime numbers.
    The ambiguous ranges are computed for each train and congruence between these are found by exhaustive search.
    However, in this approach, a small range error on a single PRF can cause a large error in the resolved range with no indication that this has happened \cite{270476}.
    
    Trunk et. al \cite{270476} propose a clustering algorithm which implements the search of a matching interval by computing average distances to cluster centers.
    This technique still requires exhaustive search of clusters and does not process the samples jointly, decreasing signal to noise ratio (SNR).
    An alternative method using a maximum likelihood criterion, which avoids the use of matching intervals, has been proposed for Doppler ambiguity resolution \cite{599236}.
    This algorithm, which relies on the choice of particular values for the PRFs, first estimates the folded or reduced frequency and then uses it to estimate the ambiguity order.
    However, it has been demonstrated that the ambiguity order estimation is very sensitive to the folded frequency estimation preformed initially \cite{407142}.
    
    In this paper, we adopt a random pulse phase coding (RPPC) approach to increase the unambiguous range region, while preserving the unambiguous Doppler region  using a single PRF.
    RPPC has been used in polarimetric weather radars, which exploit the inherent random phase between pulses of the popular magnetron transmitters \cite{6032738}. In this context, RPPC mitigates out-of-trip echoes \cite{6032738}.
    In our approach, a random phase is introduced from pulse to pulse, and we then jointly process the received signals from all pulses to resolve range ambiguity.
    
    Our work has three main contributions.
    First, theoretical analysis is performed on  unambiguous target recovery conditions  in the noiseless case.
    For a given ambiguous delay region $[0, Q T _ r) $ with an integer  $Q > 1$, it is proved that range ambiguity can be resolved with sparse recovery methods if the number of targets in each ambiguous range resolution bin is less than $ (P-Q+2)/2$, where $P$ is the number of transmit pulses.
    Second, compared with MPRF method, our approach improves  SNR by jointly processing the samples from the overall received signal, rather than matching the estimated parameters from each pulse train processed separately.
    Therefore, our approach achieves improved delay and Doppler estimation over the MPRF methods.
    Finally, we use the matrix version of orthogonal matching pursuit (OMP) \cite{2013arXiv1311.2448W,4385788} for unambiguous delay-Doppler recovery,  which does not involve exhaustive search.
    From a practical point of view, our technique does not require the use of different PRFs, simplifying hardware implementation.
    
    In addition, our recovery algorithm can be directly applied to compressed samples, obtained using the sub-Nyquist method proposed in \cite{6733283, 6850183, Cohen2018}.
    This scheme exploits the sparse nature of radar target scenes to overcome the sampling rate bottleneck, breaking the link between radar signal bandwidth and sampling rate.
    In \cite{6733283, 6850183, Cohen2018}, the Fourier coefficients of the received signal are obtained from low-rate point wise samples taken after analog pre-filtering.
    The delay-Doppler map may then be recovered using compressed sensing (CS) algorithms \cite{eldar2012compressed, eldar2015sampling}.
    Our CS based unambiguous delay-Doppler recovery method can be applied to these compressed samples, without requiring any modification. 
    Given the number of samples $K$ within each PRI, an upper bound on the number of targets for unambiguous target recovery is given by $ \min \{ (K+1)/2, (P-Q+2)/2 \} $.
    
    We compare our approach to the popular MPRF method of \cite{270476}, which has been shown to outperform the matching interval scheme based on the Chinese Remainder Theorem.
    We demonstrate that our algorithm outperforms  MPRF  both in Nyquist and sub-Nyquist regimes.
    
    The rest of the paper is organized as follows.
    In Section~\ref{section2}, we present the random phase coded pulse radar model with range ambiguity, introduce the corresponding sampling methods, and establish the range-Doppler recovery model.
    Section~\ref{section4} introduces our unambiguous delay-Doppler recovery algorithm based on OMP.
    Section~\ref{section-condition} presents a  theoretical analysis of the  unambiguous delay-Doppler recovery in the noiseless case.
    Simulation results are provided in Section~\ref{section5}. 
    We conclude in Section~\ref{section6}. 
    
    \emph{Notation:} 
    For a vector $\x$, a matrix $\X$, and  positive integers $i$ and $j$, the $i$-th element of $\x$ is denoted by $\x_i$, the $j$-th column of $\X$ is denoted by $\X_{ j }$, and the $(i,j)$-th element of $\X$ is written as $\X_{i,j}$.
    Here, the element index begins with zero. 
    For instance, the first element of $\x$ is $\x_0$, and the first column of $\X$ is $\X_0$. 
    Given integers $N,m,n$, $W_N^{mn}$ represents $e ^ {-j 2 \pi m n / N}$.
    In this paper, $(\cdot) ^ \mathrm{H}$, $(\cdot) ^ \mathrm{T}$, $(\cdot) ^ c$ and $(\cdot) ^ \mathrm{-1}$ are the  Hermitian transpose, transpose, conjugate and inverse, respectively.

    \section{Problem formulation} \label{section2}
    
    In this section, we  first present the signal model of random phase coded pulse radar.
    Then, we introduce the sampling schemes for radar echoes, in both Nyquist and sub-Nyquist regimes.
    Finally, we formulate the sparse matrix recovery problem for range-Doppler recovery, which will be used to derive the recovery method and recovery conditions in the following sections.
    
    \subsection{Signal model} \label{section2-1}
    
    In the signal model, a pulse-Doppler radar transceiver transmits a phase-coded pulse train consisting of $P$ equally spaced pulses.    
    For $0 \leq t \leq P T_r$, this pulse train is given by 
    \begin{equation} \label{eq-ch2-tranmit-signal}
        s(t) = \sum_{p=0}^{P-1} h(t - p T_r) e ^ {j  \phi[p]  } e ^ {j 2 \pi f _c t},
    \end{equation}
    where $h(t)$ is the time-limited baseband waveform taking nonzero values in the interval $ [ 0 , T_h  ) $ (with $T_h$ being the pulse width), the pulse-to-pulse delay $T_r$ is the PRI, and $f_c$ is the carrier frequency.
    We use $\phi[p]$ to represent the phase shift of the $p$-th pulse, for $p = 0,\ldots,P-1$. As opposed to the traditional pulse Doppler radar, where the phase codes are identical, here in the random phase coded pulse radar, $\phi[p]$ is randomly distributed in the interval $[0,2 \pi)$ and varies from pulse to pulse.
    The entire span of the	signal in \eqref{eq-ch2-tranmit-signal} is called the coherent processing interval (CPI).
    We also assume that $h(t)$ is band-limited, and $B_h$ is referred to as the bandwidth of $h(t)$.
    
    \color{black}
    
    \Copy{KeySignal1}{Consider that the radar illuminates a point target moving with radial velocity $V$, whose distance to the radar  is given by $R(t) = R(0) - Vt$.
    The echo signal from the target is} \cite{doi:10.2200/S00170ED1V01Y200902SPR008}
    \begin{equation} \label{eq-rt}
            r(t) = \sum_{p=0}^{P-1} \alpha h(t - p T_r - \tau(t)) e ^ {j  \phi[p]  } e ^ {j 2 \pi f _c (t - \tau(t))},
    \end{equation}
   \Copy{KeySignal2}{
    where  $ \tau(t)$ is the round-trip time delay and $\alpha$ is a complex amplitude factor accounting for the antenna gain, the two-way path loss and the target's RCS.
    From \cite{doi:10.2200/S00170ED1V01Y200902SPR008}, the time delay can be approximately given by}
    \begin{equation} \label{eq-taut}
        \tau(t) \approx \tau - \frac{2 V}{c} t,
    \end{equation}
\Copy{KeySignal3}{
    if $V \ll c$, where $\tau = 2 R (0) / c$.
    Combining \eqref{eq-rt} and \eqref{eq-taut},}
    \begin{equation} \label{eq-rt-2}
        \begin{aligned}
           & r(t) = \\ & \sum_{p=0}^{P-1} \alpha h \left( (1+2V/c) t - p T_r - \tau \right) e ^ {j  \phi[p]  } e ^ {j 2 \pi (f _c t - \nu t - f_c \tau ) } ,
        \end{aligned}
    \end{equation}
\Copy{KeySignal4}{
     where $\nu = -2 V f_c / c$ is defined as the Doppler frequency.
     In \eqref{eq-rt-2}, the velocity stretches or compresses the envelop of the pulse train by the factor $1+2V/c$  \cite{doi:10.2200/S00170ED1V01Y200902SPR008}.
     When $V \ll c$, this effect is negligible, and the signal after down-conversion is}
     \begin{equation} \label{eq-rd}
         r(t) e ^ {-j 2 \pi f_c t} = \sum_{p=0}^{P-1} \alpha h \left( t - p T_r - \tau \right) e ^ {j  \phi[p]  } e ^ {-j 2 \pi \nu t} , 
     \end{equation} 
 \Copy{KeySignal5}{
    where $\alpha$ incorporates the factor $ e ^ {-j 2 \pi f_c \tau} $.}
    
    \color{black}
    
    Next we consider a target scene with $L$ point targets   located within the radar coverage region.
    The $l$-th target is defined by three parameters: a time delay $\tilde{\tau}_l = 2R_l / c$, where $R_l$ is the  distance from the radar to the target at $t = 0$; a Doppler frequency $\nu _ l = 2 V_l / \lambda $, where $V_l$ is the  radial velocity of the target; and a complex amplitude factor of the echo signal  $\alpha _ l$.
    \color{black}
    \Copy{KeyRCS}{The targets are assumed to have non-fluctuating RCSs, or have slowly-fluctuating RCSs, e.g. satisfying the Swerling-1 model \cite{1057561,skolnik2008radar}, and hence $\alpha _ l$ is constant during the CPI.}
    \color{black}
    The targets are defined in the radar radial coordinate system and the Doppler frequencies are assumed to lie in the unambiguous frequency region, that is $ \nu_l \in [0, 1/T_r )$, for $l = 0,\ldots,L-1$.
    As opposed to the common assumption in traditional radars, the time delays $\tilde{\tau}_l$ are not assumed to lie in the unambiguous region, namely less than $T_r$, but may exceed $T_r$, and  range ambiguity  occurs for a conventional pulse Doppler radar.
    For convenience, we decompose $\tilde{\tau}_l$ into its integer part (the ambiguity order) $q_l $ and the fractional part (the folded or reduced delay) $\tau_l$ as
    \begin{equation} \label{eq-ch2-delay}
        \tilde{\tau}_l = \tau_l + q_l T_r,
    \end{equation}
    where $ q_l \geq 0$ is an integer and $0 \leq \tau_l < T_r$. \color{black}
   \Copy{KeyAmbiguity}{Range ambiguity may occur in radars that have wide observation range and transmit pulses with high PRF for considerations such as: (a) avoiding Doppler ambiguity for high frequency radars; (b) increasing the integrated power for low peak power radars \cite{5089550}; (c) increasing the data rate in joint radar-communication systems \cite{9093221}. } \color{black}
    
    %
    
    
    \Copy{KeySignal6}{From \eqref{eq-rd}, the received signal after down-converting is written as}
    \begin{equation} \label{eq-ch2-receive-signal}
             y(t) =  \sum_{l=0} ^ {L-1} \sum_{p=0}^{P-1}   \alpha_l h \left(t - \tau_l - (p+q_l) T_r \right) e ^ {-j 2 \pi \nu_l t} e ^ {j  \phi[p]  }  + u(t) , 
    \end{equation}
    \Copy{KeySignal7}{
    where $u(t)$ is additive white Gaussian noise (AWGN) with variance $\sigma^2$.	
    Under the reasonable assumption {\color{black} $ \max_l |\nu_l| \ll 1/T_h $}, $y(t)$ can be approximated as \cite{doi:10.2200/S00170ED1V01Y200902SPR008}}
    \begin{equation} \label{eq-ch2-received-signal-approx}
        \begin{aligned}
            & y(t) = \\ &
            \sum_{l=0} ^ {L-1} \sum_{p=0}^{P-1}   \tilde { \alpha }_l h \left(t - \tau  _l - (p+q_l) T_r \right) e ^ {-j 2 \pi \nu_l (  p  + q_l ) T_r } e ^ {j  \phi [p]  }  + u(t) ,
        \end{aligned}
    \end{equation} 
   \Copy{KeySignal8}{
    for $0 \leq t \leq P T_r $, where {\color{black} $  \tilde { \alpha }_l  = { \alpha}_l e ^ {-j 2 \pi \nu_l \tau_l}$}.}

    \begin{figure}
        \centering
        \includegraphics[width=0.9\linewidth]{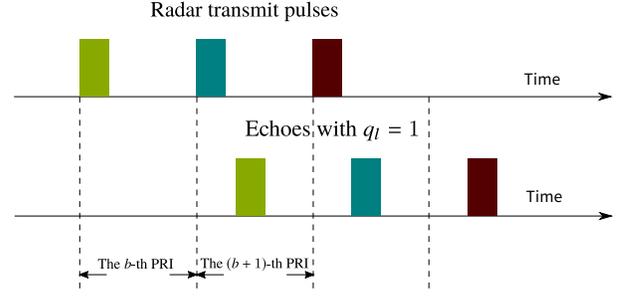}
        \caption{The pulse transmitted in the $b$-th PRI is received in the $(b+1)$-th PRI.}
        \label{fig:pulse}
    \end{figure}
    
    For convenience, we rewrite the overall received signal in \eqref{eq-ch2-received-signal-approx} with respect to each PRI.
    Note that in traditional pulse Doppler settings, namely under the assumption that $ 0 \leq \tilde{ \tau }_l < T_r $, the $p$-th pulse reflected from the targets is received in the $p$-th PRI.
    Here, the $p$-th pulse reflected from the $l$-th target is received in the $(p + q_l)$-th PRI.
    Figure \ref{fig:pulse} illustrates this phenomenon for $q_l = 1$, in which the $b$-th pulse is received in the $(b+1)$-th PRI.
    In other words, the $(b-q_l)$-th pulse reflected from the $l$-th target is received in the $b$-th PRI.
    Substituting $p = b - q_l$, we can  rewrite \eqref{eq-ch2-received-signal-approx} as 
    \begin{equation}  \label{eq-ch2-received-signal2}
        \begin{aligned}
            y(t) &= \sum_{l=0} ^ {L-1} \sum_{b=q_l} ^ {P+q_l-1} \tilde{\alpha}_l h(t - \tau_l - b T_r) e ^ {-j 2 \pi \nu _ l b T_r} e ^ {j  \phi[b-q_l]  } + u(t) \\
            &=   \sum_{b=0} ^ {P-1} \sum_{l=0} ^ {L-1}  \tilde{\alpha}_l h(t - \tau_l - b T_r) e ^ {-j 2 \pi \nu _ l b T_r} z[b-q_l] + u(t),
        \end{aligned}
    \end{equation}
    for $0 \leq t \leq P T_r $, where the sequence $\{z[p] \}$ is defined as
    \begin{equation}
        z[p] = \left \{  
        \begin{array}{cl }
            e ^ {j \phi [p]}, & \mathrm{for} \ p = 0,\ldots,P-1, \\ 
            0 , & \mathrm{for} \ p < 0 \ \mathrm{or} \  p > P-1    .
        \end{array} 
        \right.
    \end{equation}

    From \eqref{eq-ch2-received-signal2}, the received signal in the $b$-th PRI is expressed as 
    \begin{equation}  \label{eq-ch2-received-burst}
        y_b(t) = \sum_{l=0}^{L-1} \tilde{\alpha}_l h(t - \tau_l - b T_r) e ^ {-j 2 \pi \nu_l b T_r } z[b-q_l] + u_b(t),
    \end{equation}
    for $b = 0,1,\dots, P-1$ and $b T_r \leq t < (b+1) T_r$. 

    Given the received signal $y_b(t)$, $b = 0, \ldots,P-1$, our goal is to recover the range and velocity of targets, namely the time delays $\{ \tilde{ \tau }_l \} $ and Dopplers $\{ \nu_l \}$,  $l = 0,\ldots, L-1$. 
    To recover these parameters, we  first sample the signal, as presented in the next subsection.
    
    

    \subsection{Sub-Nyquist sampling} \label{section2-2}
    To reduce the sampling rates, we apply sub-Nyquist sampling in fast time, namely sampling the signal in each PRI with sampling rate lower than the bandwidth $B_h$.
    Generally,  aliasing of frequency bands will occur if the sampling rate is below the signal bandwidth.
    Nevertheless, the proposed techniques for sub-Nyquist radar \cite{6733283, 6850183, Cohen2018} can obtain the necessary frequency information to recover target parameters without aliasing, by appropriate analog anti-aliasing filtering before sub-Nyquist sampling. 
    
    To see this, we compute the Fourier series representation of the aligned received signal in the $b$-th  PRI $y_b(t + b T_r)$ with respect to the period $[0, T_r)$.
    This  results in  \cite{6733283, 6850183}
    \begin{equation}  \label{eq-ch3-received-fourier-burst}
        \begin{aligned}
            & Y_b[m] = \\  
            &  \frac{1}{T_r} H\left(\frac{2 \pi m}{T_r}\right) \sum_{l=0} ^ {L-1} \tilde{\alpha}_l  e ^ {-j 2 \pi \nu_l b T_r} e ^ {-j \frac{2 \pi}{T_r} m \tau_l }   z[b - q_l] + U_b[m], \\
        \end{aligned}
    \end{equation}   
    for $m = 0,\ldots,N-1$, where  $H(\cdot)$ denotes the Fourier transform of $h(t)$, $N = \lfloor B_h T_r \rfloor$ is the number of Fourier samples, and $\{ U_b[m] \}$ are the Fourier  coefficients of $u_b(t + b T_r)$.
    Here, $\lfloor \cdot \rfloor$ represents the floor function.
    For convenience, let
    \begin{equation} \label{eq-ch3-received-fourier-burst-normalize-0}
        \begin{aligned}
            \tilde{Y}_b[m] &= \frac{T_r Y_b[m]}{  H(2 \pi m / T_r) } \\
            & = \sum_{l=0} ^ {L-1} \tilde{\alpha}_l  e ^ {-j 2 \pi \nu_l b T_r} e ^ {-j \frac{2 \pi}{T_r} m \tau_l} z [b - q_l] + \tilde{U}_b[m]
        \end{aligned}
    \end{equation}
    be the normalized Fourier  coefficients of the $b$-th PRI, where $ \tilde{U}_b[m] = T_r U_b[m] / H( 2 \pi m / T_r)  $. 
    
    From \eqref{eq-ch3-received-fourier-burst-normalize-0}, the target parameters $\{ \tau_l, \nu_l, q_l \}$ are contained in the normalized Fourier  coefficients $\tilde{ Y }_b [ k ]$.
    To recover the target parameters, a sub-Nyquist radar obtains the Fourier  coefficients from low rate samples of the received signal in each PRI.
    In this paper, we consider  Xampling \cite{mishali_xampling_2011, mishali_xampling_2011-1} based sub-Nyquist radar systems.
    For each PRI, Xampling allows one to generate an arbitrary subset 
    \begin{displaymath}
        \kappa = \left\{m_0, \ldots, m_{K-1} \right\} \subset \left\{ 0,\ldots,N-1 \right\}, 
    \end{displaymath}
    comprised of $K = | \kappa |$ frequency components, and obtain the corresponding Fourier coefficients $Y_b[m_k]$, for $k = 0,\ldots,K-1$, from $K$  point-wise samples of the
    received signal $y_b(t + b T_r)$ after appropriate analog pre-processing.
    The procedure of Xampling is shown in Fig. \ref{fig:sample}, in which the received signal is split into $K = | \kappa |$ channels.
    In the $k$-th channel, Xampling obtains $Y_b[m_k]$ by first  mixing $y_b(t + b T_r)$ with the harmonic signal $e ^ {-j (2 \pi / T_r) m_k t }$ and then integrating over the aligned receive window $[0, T_r)$.
    
   \color{black}
   \Copy{KeyXampling}
    {By applying Xampling at radar receivers, we can reduce the sample rate without affecting the range resolution if radar targets can be sparsely represented. Thus, the cost and complexity of the analog to digital converter (ADC) at radar receivers may be reduced, especially for wide-band radars.}
    \color{black}
    
    \color{black}
    \Copy{KeyNyquist1}
    {After sub-Nyquist sampling, the problem is to recover the targets' delays $\{ \hat{\tau}_l \}$ and Dopplers $\{ \nu_l \}$, from the compressed normalized Fourier series} 
    \begin{equation} \label{eq-ch3-received-fourier-burst-normalize}
        \tilde{ Y }_b[m_k] =     \sum_{l=0} ^ {L-1} \tilde{\alpha}_l  e ^ {-j 2 \pi \nu_l b T_r} e ^ {-j \frac{2 \pi}{T_r} m _k \tau_l} z [b - q_l] + \tilde{U}_b[m_k],
    \end{equation}
\Copy{KeyNyquist2}{
    for $b = 0,\ldots,P-1$ and $k = 0,\ldots,K-1$.
    We note that these Fourier series can also be obtained by conventional Nyquist sampling, if we implement the integration in Fig. \ref{fig:sample} by performing a discrete Fourier transform to the samples.
    Therefore, the signal model in \eqref{eq-ch3-received-fourier-burst-normalize} holds in both Nyquist and sub-Nyquist regimes.
    Unlike sub-Nyquist sampling, in  Nyquist sampling all the Fourier coefficient are obtained, i.e. $K=N$ and $m_k = k$, for $k = 0,\ldots,N-1$.}
    \color{black}
    
    \begin{figure}
        \centering
        \includegraphics[width=0.9\linewidth]{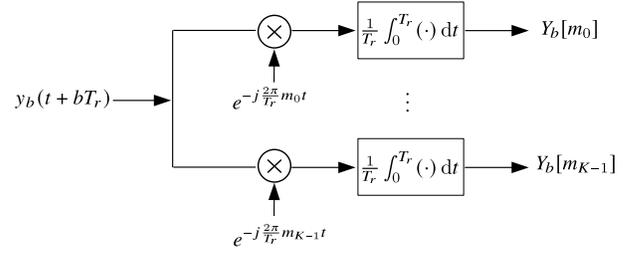}
        \caption{In Xampling \cite{Cohen2018}, the Fourier coefficients is directly sampled after analog pre-processing in each channel.}
        \label{fig:sample}
    \end{figure}

    \subsection{Matrix formulation} \label{section2-3}
    
    In this subsection, we recast \eqref{eq-ch3-received-fourier-burst-normalize} in  matrix form. 
    To that aim, we assume that the delays and Dopplers of the $L$ targets lie on the center of delay resolution bins and Doppler resolution bins, respectively.
    As in traditional pulse Doppler radar, the size of a delay resolution bin is $T_r / N$, while that of a Doppler resolution bin is $1/(PT_r)$.
    Then the delays and Doppler can be represented by $\tau_l = n_l T_r / N$ and $ \nu_l=  p_l / (P T_r)$, where $n_l$ and $p_l$ are integers in the intervals $[0,N-1]$ and $[0,P-1]$, respectively, for $ l = 0, \ldots,L-1$.
    Under this assumption, \eqref{eq-ch3-received-fourier-burst-normalize} becomes
    \begin{equation}  \label{eq-ch3-received-normalized-fourier-burst-2}
        \tilde{Y}_b[m_k] =  \sum_{l=0} ^ {L-1}  \tilde{\alpha}_l  W _ P ^{ b p _l } W_N ^{m_k n_l} z [b - q_l] + \tilde{U}_b[m_k],
    \end{equation}
    for $b = 0,\ldots, P-1 $ and $k = 0,\ldots, K-1$.  
    
    Define the ambiguity factor
    \begin{eqnarray} \label{eq-defineQ}
        Q = \max (q_0, \ldots, q_{L-1}) + 1.
    \end{eqnarray}
    When $Q = 1$, there is no range ambiguity.
    In our model, range ambiguity is considered, i.e. $Q > 1$.
    In \eqref{eq-ch3-received-normalized-fourier-burst-2}, the target parameters $\{\tilde{ \alpha} _l , n_l, q_l, p_l \} $ can be characterized by a matrix $\tilde{\X} \in \mathbb{C}^{N \times PQ}$, which is defined as
    \begin{equation}
        \tilde{\X}_{n, c } = \left\{  \begin{array}{cl}
            \tilde{ \alpha_l }, & \ \mathrm{if} \ n = n_l \ \mathrm{and} \  c = Pq_l + p_l,   \\
            0, & \ \mathrm{otherwise}.
        \end{array} \right.
    \end{equation}
    In other words, the matrix $\tilde{\X}$ is an $N \times PQ$ matrix which contains the value $\tilde{ \alpha} _l $ at the corresponding $L$ indexes $(n_l, P q _ l + p_l)$, for $l = 0,\ldots,L-1$, while the rest of the elements in $\tilde{\X}$ are all zeros.

    We may now reformulate \eqref{eq-ch3-received-normalized-fourier-burst-2} into a matrix observation model
    \begin{equation} \label{eq-ch3-received-fourier-matrix-2}
        \Y = \A \tilde{\X} \B  ^ \mathrm{T} + \U,
    \end{equation}
    where the $(k,b)$-th entry of $\Y \in \mathbb{C}^{K \times P}$ is given by $ \tilde{Y}_b[m_k]$, denoting the $k$-th Fourier coefficients of the radar signal received in the $b$-th PRI, for $b = 0,\ldots,P-1$ and $ k = 0,\ldots,K-1$; the partial Fourier matrix $\A \in \mathbb{C}^{K \times N}$ has the $(k,n)$-th entry given by $  W_N^{n m _ k } $, representing the fast-time frequency response from the $n$-th range resolution bin at the frequency point $m_k / T_r$, for $n = 0,\ldots, N-1 $ and $k = 0,\ldots,K-1$; and $\U \in \mathbb{C}^{K \times P}$ is the additive noise whose $(k,b)$-th entry is given by $ \tilde{ U }_b[m_k]$.
    
    In \eqref{eq-ch3-received-fourier-matrix-2}, the  matrix $\B \in \mathbb{C}^{ P \times PQ }$ consists of $Q$ blocks. Particularly, $\B$ is represented by
    \begin{equation}
        \B = \left[  \B ^ {(0)}, \B ^{(1)} , \ldots, \B^{(Q-1)}  \right] ,
    \end{equation}
    where $\B^{(q)} \in \mathbb{C}^{P \times P}$, and the $(b,p)$-th entry of $\B^{(q)}$ is given by $W_{P} ^ {bp} z[b-q]$, for $p = 0,\ldots,  P-1$ and $b = 0 ,\ldots, P-1$. 
    Here, each block $\B^{(q)}$ represents the slow-time response of the targets with ambiguity order $q$.
    
    From the matrix formation model \eqref{eq-ch3-received-fourier-matrix-2}, $\tilde{\X}$ should be a solution of the following equation 
    \begin{equation}  \label{eq-inverse_problem}
        \A \X \B ^ \mathrm{T} = \Y , 
    \end{equation}
    in the noiseless case.  
    The problem  is to recover the sparse matrix $\tilde{\X}$ from the observation $\Y$ and  measurement matrix $\A$ and $\B$, by finding the solution of \eqref{eq-inverse_problem}.
    For a Nyquist pulse-Doppler radar without range ambiguity, namely $K=N$ and $Q=1$,  $\A$ and $\B$ are full-rank square matrices, so the solution to \eqref{eq-inverse_problem} is unique.
    However, in our setting, namely $ K \leq N$ and $Q>1$, due to the  rank deficiency of  $\A$ and $\B$,  \eqref{eq-inverse_problem} is an under-determined equation and may not have a unique solution.	

    \color{black}
    \Copy{KeySparse}
     {Nevertheless, when $L \ll NPQ$, there are only a few nonzero elements in $\tilde{\X}$, which means that $\tilde{\X}$ is a sparse matrix. 
   This sparsity of radar targets motivates the use of CS algorithms to solve the under-determined radar observation model.   
   In recent years, CS algorithms have been applied to many fields of radars, such as synthetic aperture radar  imaging \cite{6850193,9170786}, space-time adaptive processing \cite{8738977} and randomized stepped frequency radars  \cite{8494717,9354050} and  exhibit enhanced target reconstruction quality compared to a matched filter on real radar data \cite{8938756,9354050,9170786}.
   In addition, various  low complexity methods \cite{8863507,8938756} are proposed for the real-time  implementation of CS algorithms on radars.
   In our problem, CS algorithms may be applicable for ground-to-air radars, where the sparsity of targets holds and the computation complexity of CS algorithms is affordable.
   We follow the concepts of CS and use sparse matrix recovery  to recover $\tilde{\X}$ from the received signal, as  discussed in the following section.} 
  \color{black} 
    
    %

    \section{Delay-Doppler Recovery Methods} \label{section4}
    
    \color{black}
    \Copy{KeyCS1}{
        To recover  $\tilde{\X}$ from \eqref{eq-ch3-received-fourier-matrix-2}, we consider the $\ell_0$ ``norm'' minimization problem}  
    \begin{equation} \label{eq-ch3-recovery-problem}
        \min \ \| \X \|_0, \ \ \mathrm{s.t.} \ \  \A \X \B  ^ \mathrm{T} =  \Y ,
    \end{equation}
     \Copy{KeyCS2}{
    under the assumption that $\X$ is a sparse matrix. 
    Here, $\| \cdot \|_0$ represents $\ell_0$ ``norm'', which is defined as the number of nonzero elements of a vector or a matrix.
    The $\ell_0$ ``norm'' is a non-convex function and the sparse matrix recovery problem \eqref{eq-ch3-recovery-problem} is generally NP hard.
    Therefore, solving \eqref{eq-ch3-recovery-problem} is computationally intractable in practical problems.
    A more practical way is to compute a sub-optimal solution with heuristic greedy methods such as OMP \cite{4385788} and iterative hard thresholding  \cite{blumensath_iterative_2008,BLUMENSATH2009265}.
    The problem in \eqref{eq-ch3-recovery-problem}  can also be solved by relaxing the $\ell_0$ ``norm'' minimization into the convex $\ell_1$ norm minimization, which was shown to be tight under specific conditions \cite{4472240}.  
    
    Considering the computation complexity, we use the matrix version of OMP  to solve \eqref{eq-ch3-recovery-problem}.
     Matrix OMP recovers $L$ non-zero elements in $X$ with $L$ iterations.
    In the $t$-th iteration, the location of a new non-zero element in $\X$ is first estimated by a matched filter, then the values of all the non-zero elements is updated by least squares estimation, and finally the signal residual is updated by subtracting the signals of all the non-zero elements.
    The detailed procedure of matrix OMP are omitted here and can be found in \cite{2013arXiv1311.2448W, 8361480}.
    Once $\bm{X}$ is recovered, let $\Lambda_{l,1}$ be the row index and $\Lambda_{l,2}$ be the column index of the $l$-th non-zero element in $\X$, respectively.}
    \Copy{KeySolution}{Then the delay ambiguity orders, folded delays and Dopplers are estimated as
      \begin{displaymath}
            \hat{q}_l =  \left \lfloor \frac{  \Lambda_{l,2} } {P} \right \rfloor, \ \hat{ \tau}_l =  \frac{T_r}{N} \Lambda_{l,1} , \ \hat{ \nu } _ l = \frac{\Lambda_{l,2}   - \hat{q}_l P}{P T_r},
      \end{displaymath}
        where $\lfloor (\cdot) \rfloor$ is the floor function.
      We note that here the range ambiguity is not explicitly resolved, but is indirectly resolved by solving  \eqref{eq-inverse_problem}.  
     Similarly, other CS recovery algorithms, such as FISTA \cite{doi:10.1137/080716542,5651522}, can be extended to our setting, namely to solve \eqref{eq-ch3-recovery-problem}.}

   \Copy{KeyComplexity}
    {The computational complexity of OMP is higher than traditional radar processing techniques like matched filter, which is performed with $\mathcal{O}( NKP + NP^2 Q )$ computations. 
    In the $t$-th  iteration of matrix OMP, the matched filter needs $\mathcal{O}( NKP + NP^2 Q )$ computations, the least squares estimation needs $\mathcal{O}( t^3 + t^2(K+P) )$ computations, and the complexity for residual update is $\mathcal{O}( tKP )$.
    The  complexity of matrix OMP to recover $L$ targets is then
    \begin{displaymath}
        \begin{aligned}
            &	\mathcal{O} \big( \sum_{t=1}^{L} t^3 + t^2(K+P) + tKP +  NKP + NP^2 Q \big) \\
            & = \mathcal{O} ( L^4 + L^3 (K+P) + L^2 KP + L NKP +L  NP^2 Q ).
        \end{aligned}
    \end{displaymath}
    If $L < N$ and $L <P$, the computation complexity becomes $\mathcal{O}(L NKP + L NP^2 Q )$, which is $L$ times the complexity of a matched filter.}

   \Copy{KeyOffGrid}{In our derivations, it is assumed that the delays and Dopplers lie at the center of delay resolution bins and Doppler resolution bins, respectively.
    However,  real radar parameters are defined in a continuous domain and can be ``off the grid'' \cite{6576276}, namely do not lie at the center of resolution bins.
    In this case, the signal $\bm{Y}$ may not be sparsely represented by \eqref{eq-inverse_problem}, leading to reconstruction error for sparse recovery methods.
    To overcome this problem, a simple strategy is to reduce the size of the grids.
    In particular, given an  over-discretization factor $ \gamma \geq 1$, we can reduce the size of range grids and Doppler grids to $ T_r / (\gamma N)  $ and $ 1 / (\gamma P T_r)$, respectively.
    When $\gamma$ is large enough, the continuous parameters approximately lie on the grid and the reconstruction error due to the off-grid effect can be eliminated.
    The main problem of this strategy is the increase of computation complexity, especially when $\gamma$ is large, as the number of columns in $\bm{A}$ and $\bm{B}$ increases.
    Alternatively, several sparse recovery schemes are newly proposed that do not involve  discretization  and directly recover the parameters in a continuous domain, such as atomic norm minimization \cite{6576276,9016105} and alternating descent conditional gradient \cite{boyd_alternating_2017}.}

     \color{black}

    \section{Delay-Doppler Recovery Conditions} \label{section-condition}
    
    In this section, we show that the range and Doppler parameters of radar targets can be unambiguously recovered under certain conditions in the noiseless case.
    Specifically, we  derive  conditions with respect to the number of targets, under which $\tilde{\X}$ can be unambiguously recovered by solving  \eqref{eq-ch3-recovery-problem} in the sub-Nyquist regime. 
    We begin with reformulating \eqref{eq-ch3-recovery-problem} in vector form, followed by some preliminaries on CS, and then derive the delay-Doppler recovery conditions.

    \subsection{Recovery condition in the sub-Nyquist regime}
    \label{sec-condition1}
    
    To derive the recovery conditions, we equivalently rewrite $\Y = \A \X \B ^ \mathrm{T}$ in vector form as
    \cite{2009arXiv0902.4587J}
    \begin{equation} \label{eq-ch3-received-fourier-matrix}
        \mathrm{vec}( \Y)  = ( \B   \otimes { \A } ) \mathrm{vec} (  \X ) ,
    \end{equation}
    where the operator $\mathrm{vec}(\cdot)$ produces a vector by stacking columns of a given matrix and $ \otimes $ represents the Kronecker product.
    Correspondingly, the $\ell_0$  minimization problem \eqref{eq-ch3-recovery-problem} becomes 
    \begin{equation} \label{eq-ch3-recovery-problem-2}
        \min \ \| \x \|_0, \ \ \mathrm{s.t.} \ \  \y = \T \x,
    \end{equation}
    by letting $\x=  \mathrm{vec} (\X)$, $\y =  \mathrm{vec}(\Y)$ and 
    \begin{equation} \label{eq-TBA}
        \T = \B  \otimes { \A } 
    \end{equation}
    in \eqref{eq-ch3-recovery-problem}.
    Let $\tilde{\x} = \mathrm{vec} ( \tilde{\X})$. 
    To unambiguously recover $\tilde{\X}$, we need $\tilde{\x}$ be the unique optimum of \eqref{eq-ch3-recovery-problem-2}.    
    
    CS theory provides conditions for  recovering $\tilde{\x}$ with \eqref{eq-ch3-recovery-problem-2} by investigating the spark property of the measurement matrix $\T$.
    The spark of $\T$ is defined as the size of the smallest linearly dependent subset of columns, i.e.
    \begin{equation} \label{eq-T}
        \mathrm{spark} (\T) = \min \left\{  \| \x \|_0: \ \T \x = \bm{0}, \ \x \neq \bm{0}  \right\}.
    \end{equation}
   \Copy{KeyCondition}{From the definition of spark, $\tilde{\x}$ is the unique optimum of \eqref{eq-ch3-recovery-problem-2} if   $ \mathrm{spark} ( \T ) > 2  \| \tilde{\x} \|_0$ \cite{donoho_optimally_2003, eldar_2015}.}
    From \cite{2009arXiv0902.4587J}, one has that for $\T$ of \eqref{eq-T},  
    \begin{equation}
        \mathrm{spark} ( \T ) = \mathrm{spark}(\B   \otimes \A  )
        = \min \{ \mathrm{spark}( \A ),\mathrm{spark}( \B ) \}.
    \end{equation}
    Since $\| \tilde{\x} \|_0 =  \| \tilde{\X} \|_0 =  L$, the unambiguous recovery condition becomes
    \begin{equation} \label{eq-spark_condition}
        \mathrm{spark}( \A ) > 2 L,\ \mathrm{spark}(\B) > 2L.
    \end{equation}
    
    For convenience, let $\beta_A = \mathrm{spark} (\A)$ and $\beta _ B= \mathrm{spark} (\B)$.  
    A naive bound of $\beta_A$ is given by 
    \begin{equation} \label{bound1}
        \beta_A \leq K+1,  
    \end{equation}
    because $\A$ is a $K \times N$ matrix with $K \leq N$ and any $K+1$ columns of $\A$ are linearly dependent.    
    We observe that the last block in $\B$ only has $P-Q+1$ non-zero rows, meaning that any $P-Q+2$ columns from the last block are linearly dependent. 
    As a result, 
    \begin{equation} \label{bound2}
        \beta_B \leq P - Q + 2.
    \end{equation}
    
    Since  $\A $ is a partial Fourier matrix generated by selecting $K$ rows from a $N$-dimensional Fourier matrix indexed with the subset $\kappa \subset \left\{0,\ldots,N-1 \right\} $,  $\beta_A$ depends on $\kappa$.
    From \cite{eldar2015sampling, alexeev_full_2012, achanta_spark_2017}, we can easily generate subsets $\kappa$ to ensure that $\A$ has full spark, i.e. $\beta_A = K+1$.
    We note that when the received signal is sampled at the Nyquist rate, i.e. $K=N$ and $\kappa = \left\{ 0,\ldots,N-1 \right\}$, $\A$ becomes a full Fourier matrix and the $N$ columns of $\A$ are linearly independent.
    In this case,  $\beta_A = N+1 = K+1$.
    
    Next, we  note that $\B$ is a random matrix since each element in $\B$ includes a random phase item.
    Thus $\beta_B$  is a random variable with respect to $\{z[p] \}$. 
    Under the assumption that the random phase item $ \phi[p] $ is generated from a uniform distribution over $[ 0, 2 \pi)$, the spark of $\B$, $\beta_B$, almost surely equals  $P-Q+2$, as indicated in the following theorem.
    \begin{mytheorem} \label{Theorem::A}
        Suppose that $ \phi[p] $ is independently and uniformly distributed in $[0, 2 \pi)$, for $ p = 0,\ldots,P-1 $.
        Then, with probability one, $\beta_B = P-Q+2$.
    \end{mytheorem}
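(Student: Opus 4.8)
We already know from \eqref{bound2} that $\beta_B\le P-Q+2$, so the remaining task is to show $\beta_B\ge P-Q+2$ almost surely; equivalently, that with probability one every collection of $P-Q+1$ columns of $\B$ is linearly independent. There are only $\binom{PQ}{P-Q+1}$ such collections, so by a union bound it suffices to fix one index set $S\subset\{0,\ldots,PQ-1\}$ with $|S|=P-Q+1$ and to show that the event ``the $P\times(P-Q+1)$ submatrix $\B_S$ (the columns of $\B$ indexed by $S$) is rank deficient'' has probability zero.

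First I would observe that, by the definition of $\B$, each entry of $\B_S$ is either $0$ or $W_P^{bp}z[b-q]$, i.e.\ a constant multiple of one of the variables $z[0],\ldots,z[P-1]$; hence every $(P-Q+1)\times(P-Q+1)$ minor of $\B_S$ is a polynomial in $z[0],\ldots,z[P-1]$. If at least one of these minors is not the identically-zero polynomial, then after the substitution $z[p]=e^{j\phi[p]}$ it becomes a nontrivial trigonometric polynomial in $(\phi[0],\ldots,\phi[P-1])$, which is real-analytic and not identically zero, so its zero set has Lebesgue measure zero; since $\B_S$ is rank deficient only where every such minor vanishes, and the $\phi[p]$ are i.i.d.\ uniform (hence have an absolutely continuous law), the bad event has probability zero. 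The whole problem thus reduces to this: for each $S$, exhibit a single vector $z^{\star}=(z^{\star}_0,\ldots,z^{\star}_{P-1})\in\mathbb{C}^{P}$ (not required to lie on the unit circle) for which $\B_S(z^{\star})$ has full column rank $P-Q+1$.

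To build such a witness I would exploit the ``suffix-support'' structure of $\B$: the column indexed by ambiguity order $q$ and Doppler $p$ is supported exactly on rows $q,q+1,\ldots,P-1$, with value $W_P^{rp}\,z[r-q]$ in row $r$. Writing $n_q$ for the number of columns of $S$ in block $q$, so that $\sum_{q=0}^{Q-1}n_q=P-Q+1$, one has $\sum_{q'\ge q}n_{q'}\le P-Q+1\le P-q$ for every $q\in\{0,\ldots,Q-1\}$, which is Hall's condition for matching the columns of $S$ injectively to rows inside their supports; such a matching $\rho$ exists and can be taken inside $\{Q-1,\ldots,P-1\}$. Restricting $\B_S$ to the rows $\{Q-1,\ldots,P-1\}$ yields a $(P-Q+1)\times(P-Q+1)$ matrix whose determinant is a polynomial in $z^{\star}$ carrying the distinguished ``$\rho$-diagonal'' monomial $\prod_{(q,p)\in S}z^{\star}_{\rho(q,p)-q}$ with coefficient $\pm\prod_{(q,p)\in S}W_P^{\rho(q,p)p}\ne 0$. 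I would then fix the coordinates of $z^{\star}$ greedily, scanning the columns of $S$ in order of decreasing block index and at each column choosing the single new coordinate of $z^{\star}$ needed to keep the growing square submatrix nonsingular, the Hall count ensuring that a fresh pivot row is always available.

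I expect this explicit construction of $z^{\star}$ to be the main obstacle, and it is where the randomness is genuinely needed. The tempting shortcut of a geometric witness $z^{\star}_s=t^{s}$ does \emph{not} work: with that choice any two columns of $S$ that share the same Doppler index $p$ become collinear, so the relevant minor vanishes identically. One therefore has to use that $z^{\star}_0,\ldots,z^{\star}_{P-1}$ are $P$ genuinely independent parameters together with the interval/suffix combinatorics above, and the delicate point is to verify that the greedy assignment never stalls --- i.e.\ that each partial minor encountered is a nonconstant (hence not identically vanishing) function of the coordinate of $z^{\star}$ about to be chosen, equivalently that the distinguished monomial survives the cancellations introduced by the $W_P^{rp}$ factors. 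Once the witness is in hand, the union bound over the finitely many $S$ gives $\beta_B\ge P-Q+2$ with probability one, and combined with \eqref{bound2} this yields $\beta_B=P-Q+2$ almost surely, as claimed.
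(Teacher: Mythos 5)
Your overall frame matches the paper's: reduce to showing that every $(P-Q+1)$-column submatrix restricted to rows $Q-1,\ldots,P-1$ has a determinant that is not the identically-zero polynomial in $z[0],\ldots,z[P-1]$, then invoke a measure-zero argument for the zeros of a nonzero polynomial under the absolutely continuous law of the phases, and finish with a union bound over the finitely many column subsets. All of that is fine and is exactly how the paper proceeds. But the heart of the proof --- establishing that the determinant polynomial is in fact nonzero --- is not actually carried out in your proposal, and the two devices you offer in its place do not close it. First, the Hall's-condition matching is vacuous here: once you restrict to rows $Q-1,\ldots,P-1$, \emph{every} column of every block is supported on \emph{all} of those rows (the entry in row $r$ of a block-$q$ column is $W_P^{rp}z[r-q]$ with $0\le r-q\le P-1$), so any bijection of columns to rows is a valid matching and the combinatorics gives you no leverage. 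Second, and more importantly, your claim that the ``$\rho$-diagonal'' monomial appears with coefficient $\pm\prod_{(q,p)\in S}W_P^{\rho(q,p)p}\neq 0$ is wrong as stated: distinct permutations can produce the \emph{same} monomial $\prod z[\rho(q,p)-q]$ (because the variable index $r-q$ depends on the column's block), and their signed coefficients must be summed. In the extreme case where all $P-Q+1$ columns lie in a single block, \emph{every} permutation yields the same monomial, and the surviving coefficient is the full Vandermonde determinant $\det\bigl(W_P^{rp_v}\bigr)$ --- nonzero only because the Doppler indices $p_v$ are distinct, not because any single diagonal term is nonzero. You do flag this cancellation issue yourself as ``the delicate point,'' but flagging it is not resolving it; the greedy one-coordinate-at-a-time scheme stalls exactly when a cofactor vanishes identically, and showing that it does not is the whole problem.

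The paper's proof supplies precisely the missing argument, by induction on the ambiguity order. For columns all in one block it factors out the common $z$-variables and reduces to a Vandermonde determinant (your single-block cancellation is handled there). For columns spanning several blocks, it takes $t$ to be the number of columns in the lowest block, applies the Leibniz formula, and partitions the monomials according to whether they contain all of $z[P-1-q_0],\ldots,z[P-t-q_0]$; the sum over the first group factors as $\det(\G(\c))\det(\E(\c))$, where $\G(\c)$ is a $t\times t$ single-block (Vandermonde-type) piece and $\E(\c)$ is a smaller matrix of the same structure handled by the induction hypothesis, while the second group cannot cancel the first because its monomials are distinct. If you want to complete your write-up, you need to supply an argument of this kind (or an equivalent one); as it stands the proposal reduces the theorem to the hard step and then asserts rather than proves it.
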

    
    \begin{proof}
        See Appendix.
    \end{proof}

    Combining the  results on $\beta_A$ and $\beta_B$ with the recovery condition \eqref{eq-spark_condition}, we obtain the following theorem.
    In \eqref{eq-condition},  $K$ is the number of samples in each PRI,  $P$ is the number of transmit pulses and $Q$ is the ambiguity factor defined in \eqref{eq-defineQ}.
    \begin{mytheorem} \label{theorem1}
        Assume that 1) The subset $\kappa$ is properly designed so that $\A$ has full spark; 2) The phase terms $ \{ \phi[p] \} $ are independently and uniformly distributed in $[0, 2 \pi)$.
        Suppose that there exist $L$ targets with maximal ambiguity factor $Q$.
        In the noiseless setting, the range and Doppler parameters of these targets can be unambiguously recovered with probability one by solving \eqref{eq-ch3-recovery-problem} or \eqref{eq-ch3-recovery-problem-2} if and only if  
        \begin{equation}\label{eq-condition}
            L < \min \left\{ \frac{K+1}{2}, \frac{P-Q+2}{2} \right\}.
        \end{equation}	
    \end{mytheorem}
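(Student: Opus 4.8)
\emph{Overview and reduction.} The plan is to convert the unambiguous-recovery requirement into a spark condition on $\T=\B\otimes\A$ and then read off that spark from two facts already in this section: $\kappa$ can be chosen so that $\mathrm{spark}(\A)=K+1$, and $\mathrm{spark}(\B)=P-Q+2$ with probability one by Theorem~\ref{Theorem::A}. Concretely, recovering the target delays and Dopplers is equivalent to recovering the sparse matrix $\tilde\X$, i.e.\ the vector $\tilde\x=\mathrm{vec}(\tilde\X)$, as a solution of \eqref{eq-ch3-recovery-problem-2}, since $\{\tilde\tau_l,\nu_l\}$ are read off from the support and entries of $\tilde\X$. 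By the standard spark criterion, $\tilde\x$ is the \emph{unique} $\ell_0$-minimizer of \eqref{eq-ch3-recovery-problem-2} as soon as $\mathrm{spark}(\T)>2\|\tilde\x\|_0=2L$, which will give the ``if'' direction. For the converse I will show that whenever $\mathrm{spark}(\T)\le 2L$ some admissible scene is not recoverable: pick a minimal linearly dependent set of $\mathrm{spark}(\T)\le 2L$ columns of $\T$, take an associated null vector $\v$, split its support into two disjoint pieces of size at most $L$ so that $\v=\x'-\x''$, and note $\T\x'=\T\x''$ with $\x'\ne\x''$; then at least one of the matrices $\X',\X''$ is not the unique $\ell_0$-minimizer of its own noiseless measurements, so it cannot be unambiguously recovered.

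\emph{Evaluating the spark.} Using the Kronecker identity $\mathrm{spark}(\B\otimes\A)=\min\{\mathrm{spark}(\A),\mathrm{spark}(\B)\}$ together with $\mathrm{spark}(\A)=K+1$ (first assumption) and $\mathrm{spark}(\B)=P-Q+2$ with probability one (second assumption and Theorem~\ref{Theorem::A}), we get $\mathrm{spark}(\T)=\min\{K+1,\,P-Q+2\}$ with probability one. Substituting into the criterion above, $\mathrm{spark}(\T)>2L$ holds if and only if $L<\min\{(K+1)/2,\,(P-Q+2)/2\}$, which is exactly \eqref{eq-condition}; this settles the ``if'' direction. If instead \eqref{eq-condition} fails then $2L\ge K+1$ or $2L\ge P-Q+2$, hence $\mathrm{spark}(\T)\le 2L$ and the construction of the previous paragraph applies, giving the ``only if'' direction.

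\emph{Making the counterexamples admissible.} The point requiring care is that $\X',\X''$ must be genuine target scenes: at most $L$ targets each, and maximal ambiguity factor exactly $Q$. If the binding bound is $(P-Q+2)/2$, use the fact --- already exploited to obtain $\mathrm{spark}(\B)\le P-Q+2$ --- that the last block $\B^{(Q-1)}$ has only $P-Q+1$ nonzero rows, so \emph{any} $P-Q+2$ of its columns are linearly dependent; choosing the dependent set among those columns forces all nonzeros of $\X'$ and $\X''$ into the ambiguity-order-$(Q-1)$ block, so each scene has maximal ambiguity factor $Q$. If the binding bound is $(K+1)/2$, take any $K+1$ columns of $\A$ (dependent since $\A$ has $K$ rows), place the resulting null-vector coefficients in a single column of $\tilde\X$ lying inside $\B^{(Q-1)}$, and split by rows; again both scenes have maximal ambiguity factor $Q$. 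Since these dependent column sets do not depend on the realization of $\{z[p]\}$, the ``only if'' direction in fact holds deterministically, not merely with probability one. Combining the three steps proves the theorem.

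\emph{Expected main obstacle.} The spark bookkeeping (the Kronecker identity and the two spark values) is routine; the delicate step is the last one, turning the abstract inequality $\mathrm{spark}(\T)\le 2L$ into a pair of \emph{valid} target configurations with the prescribed maximal ambiguity factor. The zero-row structure of $\B^{(Q-1)}$ is exactly what makes this possible, so I expect no essential difficulty beyond careful index bookkeeping.
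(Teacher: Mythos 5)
Your proof is correct and follows the same route as the paper: vectorize to $\T=\B\otimes\A$, invoke the spark uniqueness criterion, the Kronecker spark identity, and the values $\mathrm{spark}(\A)=K+1$ and $\mathrm{spark}(\B)=P-Q+2$ (Theorem~\ref{Theorem::A}) to arrive at \eqref{eq-condition}. The one place you go beyond the paper is the ``only if'' direction: the paper's proof simply states that the recovery condition ``becomes'' the two inequalities, implicitly appealing to the standard necessity of the spark condition for uniform recovery of all $L$-sparse vectors, whereas you explicitly build two admissible scenes producing identical measurements and, importantly, verify that the dependent column supports can be placed inside the block $\B^{(Q-1)}$ (or in a single column of it, for the $\A$-side bound) so that both counterexample scenes genuinely have maximal ambiguity factor $Q$ and at most $L$ targets. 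That admissibility check is a real subtlety the paper's terse argument glosses over, and your remark that the necessity direction holds deterministically rather than merely almost surely is also correct.
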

    \begin{proof}
        From the assumptions, one has $\beta_A = K+1$ and $\beta_B = P-Q+2$ with probability one.
        The recovery condition then becomes
        \begin{equation} \label{eq-condition-2}
            K+1 > 2L,\ P-Q+2 > 2L.
        \end{equation}
        The condition in \eqref{eq-condition} can be directly obtained from \eqref{eq-condition-2}.
    \end{proof}

    The randomness of the phase codes $\{z[k]\}$ is crucial for the derivation of Theorem~\ref{Theorem::A} and \ref{theorem1}.
    In a pulse-Doppler radar without phase coding in which $z[0]=\cdots=z[P-1]$,  it can be validated that
    \begin{equation}
        \B_1 - \B_2 = \B_{P+1} - \B_{P+2},
    \end{equation}
    which means there exist $4$  linearly dependent columns in $\B$. As a consequence,  $\beta_B \leq 4$ and the  number of targets is bounded by $L < 2$ from \eqref{eq-spark_condition}.
    
    \subsection{Recovery condition in the Nyquist regime}
    
    In the Nyquist regime, $\A$ is an invertible Fourier matrix. Therefore, \eqref{eq-ch3-recovery-problem} becomes
    \begin{equation} \label{eq-ch3-recovery-problem-3}
        \min \ \| \X \|_0, \ \ \mathrm{s.t.} \ \  \B \X ^ \mathrm{T}    = \Y ^ \mathrm{T} \A ^ c  .
    \end{equation}
    Let $\X ^ \mathrm{T} = [\bm x_0,  \ldots, \bm x _ {N-1}  ]$.
    The  problem in \eqref{eq-ch3-recovery-problem-3} can be split into multiple independent sub-problems:
    \begin{equation} \label{eq-ch3-recovery-problem-4}
        \min \ \| \bm x _n \|_0, \ \ \mathrm{s.t.} \ \  \B \bm x _ n    = [\Y ^ \mathrm{T} \A ^ c]_n  .
    \end{equation}
    for $n = 0,\ldots,N-1$.   
    The corresponding recovery condition for these sub-problems are
    \begin{equation} \label{eq-condition-4}
        \| \bm x _ n \|_0 < \beta_B  / 2 = \frac{P-Q+2}{2}, \ 0 \leq n \leq N-1.
    \end{equation}
    The conditions in the Nyquist regime only require that the number of targets within each reduced range resolution bin is bounded by $(P-Q+2)/2$, and is much looser than that in the sub-Nyquist regime, in which the total number of targets is bounded by $(P-Q+2)/2$.

    We  note that the upper bound on the  number of targets in \eqref{eq-condition-4} is reduced compared to a conventional pulse Doppler radar.
    For a Nyquist pulse Doppler radar without range ambiguity, the matrix  $\B$ becomes an  invertible Fourier matrix.
    Thus, the targets can be recovered by directly solving \eqref{eq-inverse_problem}  without using CS.
    Under this circumstance, the number of recoverable targets in each range resolution bin is $P$.
    In our setting,  range ambiguity leads to rank deficiency of $\B$.
    As a result, the observation equation is under-determined, and is solved by sparse matrix recovery methods, for which the upper bound of the number of targets is given in \eqref{eq-condition-4}.

    \section{Numerical experiments} \label{section5}
    
    In this section, we present some numerical experiments illustrating our proposed unambiguous range-Doppler recovery algorithm.
    We compare our method with the classical MPRF algorithm from \cite{270476} and examine the impact of sub-Nyquist sampling as well as range ambiguity order $Q$ on the detection performance.
    
    \subsection{Preliminaries}
    
     \color{black}
    \Copy{KeyPreliminaries}{
    We consider a pulse Doppler radar transmitting a pulse train composed of $P = 20$ pulses with PRI $T_r = 25 \mu \mathrm{sec}$ over a CPI of $500 \mu \mathrm{sec}$.
    The carrier frequency is $f_c = 10 \mathrm{GHz}$ and the propagation velocity is $f_c = 3 \times 10 ^ 8 \mathrm{m/s}$.
    Then we have $R_\mathrm{max} = 3.75\mathrm{km}$ and $V_\mathrm{max} = 300\mathrm{m/s}$. 
    The baseband waveform $h(t)$ is a linear frequency modulation pulse  with bandwidth $B_h = 20 \mathrm{MHz}$  and pulse width $T_h = 1 \mu \mathrm{sec} $. 
    Specifically, the expression of $h(t)$ is
    \begin{displaymath}
        h(t) = \left\{    
        \begin{array}{cc}
        e ^ {j \pi (B_h / T_h) t ^ 2},    & 0 \leq t \leq T_h ,  \\
          0 , & \mathrm{otherwise}. 
        \end{array}      
        \right.
    \end{displaymath}
}

   \color{black}
   
    To extend the maximal unambiguous range, we adopt range phase coding to each pulse, where the phase $ \phi[p]  $ is uniformly distributed in $[0, 2 \pi)$, for $ p = 0,\ldots,P-1$.
   
    The number of Nyquist rate samples in each PRI is  $N = T_r B_h= 500$.
    In the simulations, we investigate sub-Nyquist sampling by reducing the number of samples $K$ in each PRI.
    We randomly select $K < N$ frequency components and obtain the corresponding compressed Fourier coefficients by the Xampling scheme in Fig. \ref{fig:sample}.
    We find that the matrix $\A$ has full-rank with high probability if the frequency components are selected randomly.

    
    We consider $L$ targets with Doppler frequencies spread uniformly at random in the appropriate unambiguous region $ \left[0, 1 / T_r \right) $ and delays spread uniformly at random in the ambiguous region $ \left[0, Q T_r \right)$ for ambiguity factor $Q $.
    In the simulations, the echoes from all targets have unit amplitude, i.e. $| \alpha_l | = 1$, for $l=0,\ldots,  L-1$.
    
     \color{black}
    \Copy{KeySignal}
   { In the simulations, we produce the received signal $ y(t) $  with \eqref{eq-ch2-receive-signal}.
    The received signal is corrupted with AWGN $u(t)$ which has  variance $\sigma^2$ and is band-limited to $B_h$.    
    The total transmit SNR of the transmitted pulse train is 
    \begin{displaymath}
    	\mathrm{SNR} = \frac{ P \int_{0}^{T_h} |  h(t) | ^ 2 \, \mathrm{d} t } {\sigma^2}.
    \end{displaymath}
    Here, the inter-pulse random phase coding does not affect the  SNR after coherent integration.

    After $y(t)$ is produced, we  compute  $Y_b[m] $ and $H(2\pi m / T_r)$ by applying Fourier transform to $y_b(t)$ and $h(t)$, respectively.
    Then the elements of matrix  $\Y$ are computed via \eqref{eq-ch3-received-fourier-burst-normalize-0} and
    \eqref{eq-ch3-received-fourier-burst-normalize}, after which the delays and Dopplers of the targets are recovered by solving \eqref{eq-ch3-recovery-problem} via matrix OMP. }   \color{black}
    We use a hit-or-miss criterion as a performance metric. A ``hit'' is defined as a delay-Doppler estimate circumscribed by a rectangles around the true target position in the time-frequency plane.
    We use rectangles with axes equivalent to $\pm 1 $ times the delay and Doppler resolution bins, equal to $1 / B_h = 50 \mathrm{nsec}$ and $1 / (P T_r) = 2 \mathrm{KHz}$, respectively.

    
    \subsection{Comparison to MPRF scheme}
    
    We compare our approach to the popular MPRF method of \cite{270476} that has been shown to outperform the matching interval scheme based on the Chinese Remainder Theorem.
    In  MPRF, the pulse Doppler radar transmits two pulse trains with baseband signal $h(t)$.
    The first train is composed of $P_1 = 20$ pulses, with PRI $T_{r,1} = 25 \mu \mathrm{sec}$ over a CPI of $500 \mu \mathrm{sec}$.
    The second train is composed of $P_2 = 25$ pulses, with PRI $T_{r,2} = 20 \mu \mathrm{sec}$ over a CPI of $500\mathrm{\mu sec}$.
    Like the random phase coded pulses, the observation model in \eqref{eq-ch3-received-fourier-matrix-2} still holds for each pulse train in the MPRF scheme, where the measurement matrix $\B$ is constructed for $Q=1$ and all $z[p] = 1$.
    We use matrix OMP to recover the ambiguous delay-Doppler map from each pulse train.
    Once the targets' Doppler frequencies and ambiguous delays are recovered, we apply the clustering method in \cite{270476} to  estimate the unambiguous delays.
    The total transmit SNR of the two transmit pulse trains is 
    \begin{displaymath}
    	\mathrm{SNR} = \frac{ (P_1 + P_2) \int_{0}^{T_h} |  h(t) | ^ 2 \, \mathrm{d} t } {\sigma^2}.
    \end{displaymath}
    
    In this experiment, the number of targets is $L = 5$ and  ambiguity factor $Q = 4$.
    Here, we require the Dopplers and delays lie in the center of Doppler and range resolution bins, respectively.
    Performance of RPPC and MPRF is compared with the same range resolution, Doppler resolution and total transmit SNR  so that the comparison is fair.
     
     Figure~\ref{fig:p1} presents  the delay-Doppler recovery performance of both MPRF and RPPC with respect to transmit SNR. 
     The results are obtained in both Nyquist  and sub-Nyquist regimes.
     In  the sub-Nyquist regime, we randomly choose $K = 250$ and $K = 125$ Fourier coefficients in each PRI, leading to a  compression ratio of $50\%$ and $25\%$, respectively.
     As Fig. \ref{fig:p1} shows, the hit rate increases with the increase of the total transmit SNR, which is proportional to the SNR after matched filtering in fast and slow time.
     We observe that our RPPC approach outperforms the MPRF approach in both Nyquist and sub-Nyquist regimes, in terms of the hit rate under the same total transmit SNR.
     The explanation is that the RPPC approach jointly processes all the received samples, while the MPRF approach processes the received samples of the two pulse trains separately.
     Therefore, the RPPC approaches can obtain better SNR after fast and slow time matched filter with the same total transmit SNR.
     
     To achieve the same hit rate as  MPRF, our RPPC technique requires a lower total transmit SNR, leading to around $3$dB SNR gain.
     As a result, for a radar transmit system with fixed pulse width, peak power and PRF,  RPPC  needs a lower number of transmit pulses to achieve a commensurate performance with  MPRF, and thus reduces the cost of power and transmit time.
     
     The impact of sub-Nyquist sampling is also demonstrated in Fig. \ref{fig:p1}. It is observed that the recovery performance in the Nyquist regime is better than that in the sub-Nyquist regime, and the recovery performance in sub-Nyquist regime decreases as the number of samples decreases. This is because sub-Nyquist sampling leads to  loss of SNR. 
     
       \begin{figure}
     	\centering
     	\includegraphics[width=\linewidth]{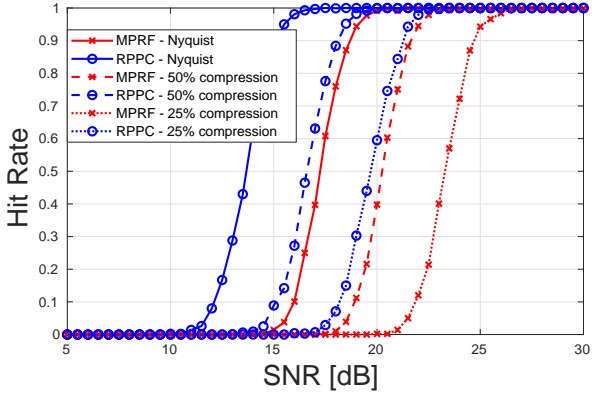}
     	\caption{Delay-Doppler recovery performance for MPRF and RPPC in  Nyquist  and sub-Nyquist regimes.}
     	\label{fig:p1}
     \end{figure}


    \subsection{Performance in the off-grid case}
    
     \color{black}
    \Copy{KeySimulationOffGrid}{The last experiment was conducted in the on-grid case, namely the delays and Dopplers lie in the center of the resolution bins.
    In this experiment, we consider a more realistic scene where the delays and Dopplers do not necessarily lie in the center of resolution bins, and examine the performance of matrix OMP in the off-grid case.
    In particular, there are $L = 5$ point targets whose Doppler frequencies and delays can be arbitrary values in the region $ \left[0, 1 / T_r \right) $ and $ \left[0, Q T_r \right)$ for $Q = 4$, respectively.
    
   In the off-grid case, we directly produce the Fourier coefficients in \eqref{eq-ch3-received-fourier-burst} for convenience. The hit rate of matrix OMP is computed for over-discretization factor $ \gamma = 1,2,4,16$.
   When $\gamma = 1$, there is no over-discretization, i.e. the range and Doppler grid size are equal to the range and Doppler resolution, respectively.
   The result for Nyquist and sub-Nyquist sampling is given in Fig. \ref{fig:p21} and Fig. \ref{fig:p22}, respectively.
   In  the sub-Nyquist regime, we randomly choose $K = 250$ Fourier coefficients in each PRI, leading to a  compression ratio of $50\%$. 
   The hit rate of matrix OMP in the on-grid case is also displayed for comparison. 
   From Fig. \ref{fig:p21} and Fig. \ref{fig:p22}, matrix OMP exhibits a serious performance degradation in the off-grid case compared to the on-grid case, if no over-discretization is performed.
   It it observed that the hit rate is only around $0.7$ even when the SNR is high enough because of the mismatch of observation model.
   Nevertheless, the performance degradation can be significantly relieved by over-discretization.
   When $\gamma \geq 2$, performance of matrix OMP in the off-grid case is still worse than the counterpart in the on-grid case, but the performance gap is not significant, especially when $\gamma$ is large.
   For a high SNR, the performance loss can be effectively reduced by increasing $\gamma $. 
   The results here indicate that matrix OMP is still applicable in our problem by properly decreasing the grid size.
}

 \color{black}
      
    \begin{figure}
    	\centering
    	\includegraphics[width=\linewidth]{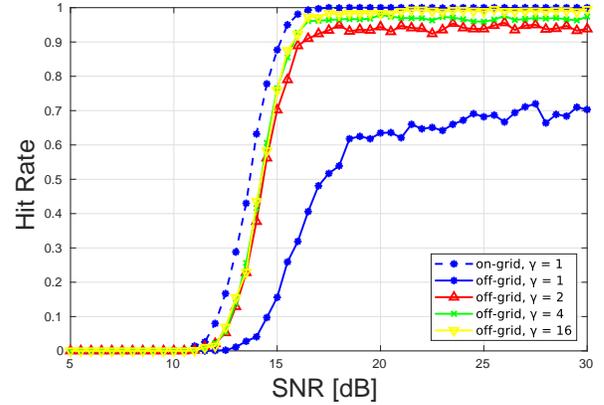}
    	\caption{Delay-Doppler recovery performance for RPPC in the on-grid and off-grid case, where Nyquist sampling is performed.}
    	\label{fig:p21}
    \end{figure}

  \begin{figure}
	\centering
	\includegraphics[width=\linewidth]{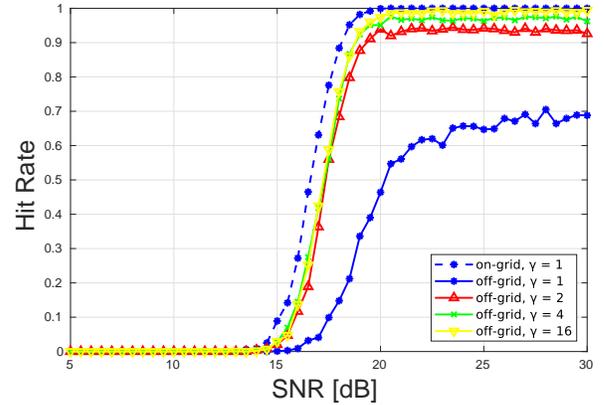}
	\caption{Delay-Doppler recovery performance for RPPC in the on-grid and off-grid case, where sub-Nyquist sampling with compression ratio of $50\%$  is performed.}
	\label{fig:p22}
\end{figure}

   \subsection{Impact of  number of targets}
   
    \color{black}
   
   \Copy{KeySimu1}
  {We also performed simulations to examine the impact of number of targets on the recovery performance.
  Specifically, the hit rate versus SNR is calculated  and demonstrated in Fig. \ref{fig:p3} for $L = 7,9,11$, in both Nyquist and sub-Nyquist regimes, where $Q=4$ and the targets lie in the center of range-Doppler resolution bins.
  In  the sub-Nyquist regime, we randomly choose $K = 250$ Fourier coefficients in each PRI, leading to a  compression ratio of $50\%$.
  It is observed that the recovery performance only slightly decreases as $L$ increases for both Nyquist and sub-Nyquist sampling if $L$ is not very large.
  Note that from the recovery condition obtained in Sec. \ref{sec-condition1} for sub-Nyquist sampling, perfect recovery is guaranteed for arbitrary sparse matrix $\X$ with $L < 9$.
  That means if $L < 9$, the targets can always be recovered in the noise-less case, regardless of their locations and RCSs, while if $L \geq 9$, the targets may not be correctly reconstructed, depending on their parameters.
  Nevertheless, as shown in Fig. \ref{fig:p3}, the targets can still be recovered with a high probability when the targets are uniformly distributed, even if $ L \geq 9 $, namely the sparse recovery condition is not met.
  This result suggests that the sparsity constraint to radar targets  can be relaxed in practical use, extending the application scope of sub-Nyquist sampling.}
  
    \begin{figure}
  	\centering
  	\includegraphics[width=\linewidth]{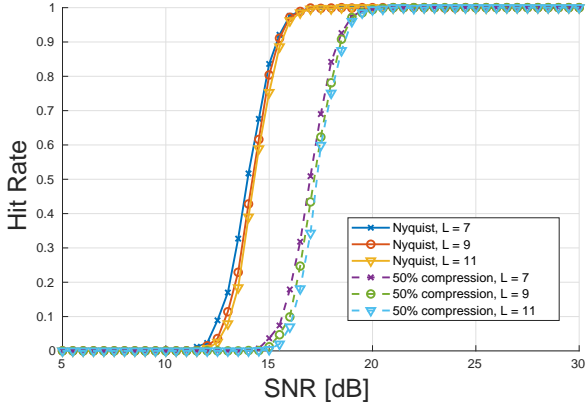}
  	\caption{Delay-Doppler recovery performance for $L = 7,9,11$, in both Nyquist and sub-Nyquist regimes.}
  	\label{fig:p3}
  \end{figure}
   
   \Copy{KeySimu2}
  {When the recovery condition is not met, although the ranges and Dopplers of randomly distributed targets may still be recovered with high probability, there should exist some radar target scenes in which recovery fails. 
  Moreover, even if the recovery condition is met, the targets may not be perfectly recovered with practical algorithms like OMP and $\ell_1$ norm minimization. 
  To show this,  consider the worst case in which all the $L$ targets are located in the same reduced range resolution bin, i.e.
  \begin{displaymath}
  	n_0 = \cdots = n_{L-1} .
  \end{displaymath} 
  The velocities are uniformly distributed in the unambiguous region $[0, 1/T_r)$.
  The ambiguity orders can be arbitrary integers in $[0,Q-1]$.
  The delays and Dopplers of targets lie at the center of the corresponding resolution bins.
  
  From the recovery condition, if $L < (P-Q+2)/2$, the targets can be  recovered with probability 1 by finding the sparsest solution of  \eqref{eq-inverse_problem}.
  However, as practical algorithms may not find the sparsest solution, the hit rate for them may be less than $1$. 
  Here, we evaluate the impact of the number of users on the hit rate under this target scene for $\ell_1$ norm minimization and OMP, under random pulse phase and random locations of targets.
  The simulation is run in the noiseless case and in the Nyquist regime. 
  In particular, we recover $\bm X$ by solving the sub-problems in \eqref{eq-ch3-recovery-problem-4} via $\ell_1$ norm minimization and OMP.
  To reduce the computational complexity, we first detect the reduced range resolution bin where the targets lie, and then  solve the sub-problem in the detected reduced range resolution bin.
  
  The hit rate of versus number of users for $\ell_1$ norm minimization   is given in Fig. \ref{fig:p41}, for different $Q $ and $P$.
  In Fig. \ref{fig:p41}, if $Q  = 1$, namely there is no range ambiguity, the hit rate is always  1 regardless of the number of targets, since the equation in \eqref{eq-inverse_problem} well-determined.
  If $Q > 1$, the hit rate is close to 1 for small $L$, i.e. range ambiguity can be resolved for sparse targets.
  When $L$ becomes larger, the targets are not sparse enough and the hit rate can be rather low. 
  From Fig. \ref{fig:p41}, it is observed that the number of recoverable targets can be increased by transmitting more pulses or reducing the ambiguity order.
  This observation is also verified by the theoretical bound $(P-Q+2)/2$. 
  It is also observed that, to achieve a hit rate close to 1, the   maximal number of recoverable targets with $\ell_1$ norm minimization is less than the  theoretical bound, indicating the performance gap between $\ell_1$ norm minimization and $\ell_0$  minimization.
  
  The hit rate of versus number of users for OMP   is given in Fig. \ref{fig:p42}, for different $Q $ and $P$.
  Comparing Fig. \ref{fig:p41} and \ref{fig:p42}, the recovery performance of OMP is worse than $\ell_1$ norm minimization although OMP generally has lower computation load. Nevertheless, OMP still guartantees a high hit rate when the number of targets is small.
}
    
     
     \color{black}
    

    \begin{figure}
   	\centering
   	\includegraphics[width=\linewidth]{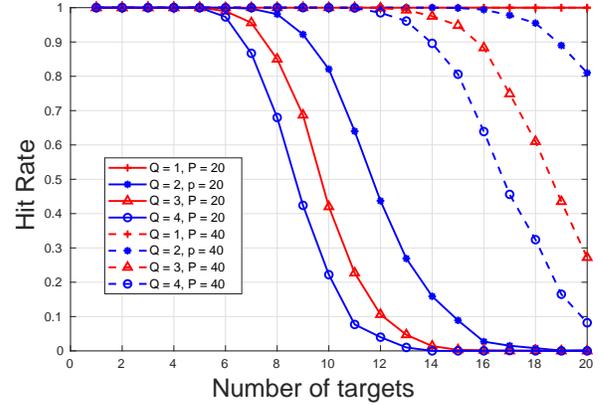}
   	\caption{Hit rate versus number of targets in the Nyquist regime for $\ell_1$ norm minimization.}
   	\label{fig:p41}
   \end{figure}

     \begin{figure}
   	\centering
   	\includegraphics[width=\linewidth]{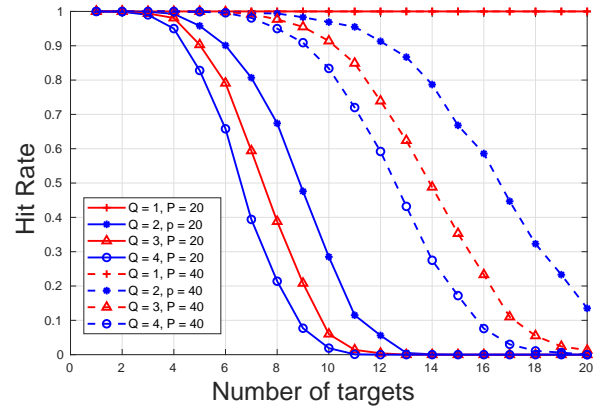}
   	\caption{Hit rate versus number of targets in the Nyquist regime for  OMP.}
   	\label{fig:p42}
   \end{figure}

   
      \subsection{Computation time of matrix OMP}
      
       \color{black}
      
      \Copy{CompTime}{
      In the end, we compare the computation time of matrix OMP under different number of targets.
      This comparison is performed for different $P$, in both Nyquist and sub-Nyquist regimes. 
      In sub-Nyquist sampling,  we randomly choose $K = 250$ Fourier coefficients, leading to a compression ratio of $50\%$.
      The average computation time versus $L$ is shown in Fig. \ref{fig:p5}.
      From  Fig. \ref{fig:p5}, we observe that the computation time of matrix OMP approximately grows linearly with  $L$ when $L$ is small.
      When more pulses are transmitted, namely $P$ becomes larger, the computation load becomes higher.
      The computation time in the sub-Nuquist regime is less than the counterpart in the Nyquist regime, since sub-Nyquist sampling reduces the size of data.}
      
       \color{black}

       \begin{figure}
      	\centering
      	\includegraphics[width=\linewidth]{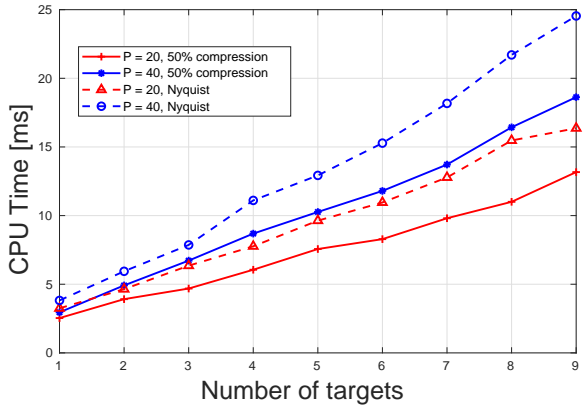}
      	\caption{Computation time versus number of targets $L$ in both Nyquist and sub-Nyquist regimes.}
      	\label{fig:p5}
      \end{figure}
 
    \section{Conclusion}  \label{section6}
    
     \color{black}
    
   \Copy{KeyConclusion}{In this paper, a random pulse phase coding approach is proposed to resolve the range ambiguity of pulse-Doppler radars.
   The advantage of our approach is that the samples from all pulses can be jointly processed to estimate the range-Doppler parameters, and thus the SNR is improved compared to the MPRF method.
   For random pulse phase coding, we establish a range-Doppler recovery problem, which is under-determined.  
   To unambiguously recover the ranges and Dopplers, we propose to solve this problem by sparse recovery algorithms, which can be used in both Nyquist and sub-Nyquist regimes.
   We analyze the performance of sparse recovery by deriving the maximal number of recoverable targets in the noiseless case, given the number of samples $K$ in each PRI, the number of transmit pulses $P$ and the maximal ambiguity order $Q$.
   In particular,  sparse recovery guarantees unambiguous recovery if the number of targets in each reduced range resolution bin is less than $(P-Q+2)/2$ in the Nyquist regime. 
   In the sub-Nyquist regime, the recovery condition is tighter, and requires that the total number of targets is less than $\min \{ (K+1)/2,(P-Q+2)/2 \}$.
   Simulations demonstrate that our approach outperforms  MPRF  in terms of detection rate, in both Nyquist and sub-Nyquist regimes.
   We also verified that sparse recovery algorithms, like matrix OMP, are still applicable even when the delays and Dopplers of the targets do not lie in the center of  resolution bins.}
    
  \Copy{KeyLimitation1}{Despite the above contributions, our approach still has some limitations.
  First,  our approach is proposed for slowly-fluctuating targets. If the target is fast-fluctuating, the RCS of a target varies from pulse to pulse, which corrupts the coded pulse phase.}
  \Copy{KeyLimitation2}{Second, the proposed target reconstruction method requires that the targets satisfy certain sparse conditions. When the target scene is not sparse, range ambiguity is not guaranteed to be correctly resolved.
  Finally, although our recovery method is applicable by reducing the grid size in the off-grid case, this strategy increases the computation complexity and still exhibits performance loss due to model mismatch.
  More effective sparse recovery algorithms should be applied in the off-grid case.}

     \color{black}
    
    \appendix[Proof of THEOREM \ref{Theorem::A}]
    
    Under the assumption that $ \phi[p] $ is independently and uniformly distributed in $[0, 2 \pi)$, we will prove that $\beta_B = P-Q+2$ with probability one.
    Since $\beta_B \leq P-Q+2$, we  need to prove that  $\beta_B  \geq P-Q+2$, namely any $P-Q+1$ columns of $\B$ are linearly independent. 
    
    Let $\overline{\B}$ be the matrix consisting of the $(Q-1)$-th  to the $(P-1)$-th rows in $\B$.
    In the rest of the proof, we  prove  that any $P-Q+1$ columns of $\overline{\B}$ are linearly independent with probability one. As a corollary, any  $P-Q+1$ columns of $\B$ are linearly independent  with probability one.
    
    For convenience, we use the following notations. Let $u = P - Q + 1$.
    Define
    \begin{equation}
        \mathcal{C} = \left\{ (c_0,c_1,\ldots,c_{u-1})  \ \middle| \ 0 \leq c_0 < c_1 < \cdots < c_{u-1}  \leq PQ-1  \right\},
    \end{equation}
    which consists of all the $u$-combinations of the column index set $\{ 0, \ldots,PQ-1 \}$ for $\overline{\B}$.
    For each $\c = (c_0,\ldots,c_{u-1}) \in \mathcal{C} $, we stack the $  c_0$-th, $  c_1$-th,..., and the $c_{u-1} $-th  columns of $\overline{\B}$ into a square matrix $\D (\c)$.
    Let $c_v = P q_v  + p_v$, where $q_v$ is an integer in $[0,Q-1]$, $p_v$ is an integer in $[0,P-1]$, and $q_ 0 \leq q_1 \leq \cdots \leq q_{u-1}$, for  $v = 0,\ldots, u-1$.
    Then the $(b-Q+1,v)$-th entry of $\D(\c)$ is given by $ W _ {P} ^ {b p_v } z[ b-q_v ] $, for $b = Q - 1 ,\ldots, P-1$ and  $ v = 0 , \ldots , u-1$.

    The columns of a square matrix are linearly independent if and only if the determinant of the matrix is not zero.
    Therefore, the statement that any $P-Q+1$ columns of $\overline{\B}$ are linearly independent is equivalent to any of the following two statements:
    \begin{enumerate}[(1)]
        \item For any $\c \in \mathcal{C} $, $f(\z; \c) = \det (\D(\c)) \neq 0$.
        \item $F(\z) = \Pi_{\c \in \mathcal{C}} f(\z;\c)\neq 0$.
    \end{enumerate}
    Here, $\z = (z[0],\ldots,z[P-1])= \left(  e ^ {j \phi[0]},\ldots, e ^ {j \phi[P-1]} \right)$, for  $ p = 0 ,\ldots, P-1$. 
    
    We note that both $ F(\z)$ and $ f(\z; \c) $ can be expressed as a polynomial with respect to $\z$.  
    To prove that any $P-Q+1$ columns of $\overline{\B}$ are linearly independent with probability one, we only need to prove that $F (\z) \neq 0$ with probability one.    
    
    To finish the proof, we  first prove that $f(\z;
    \c)$ is a nonzero polynomial, as stated in Lemma \ref{lemma::B}. As an immediate consequence of Lemma \ref{lemma::B}, $F(\z)$ should be a nonzero polynomial. 
    Later in Lemma \ref{Lemma::A}, we  use a strengthened version of the well-known fact that the Harr measure of the  zeros of a nonzero polynomial is zero.
    Combining Lemma \ref{lemma::B} and Lemma \ref{Lemma::A}, since $F(\z)$ is a nonzero polynomial, the Harr measure of  its zeros is zero.
    Therefore, the probability for $F(\z) = 0$ is zero, and  $F(\z) \neq 0$  with probability one.
    
    We start the proof by presenting Lemma~\ref{lemma::B}.
    
    \begin{mylemma} \label{lemma::B}
        For any $\c$ in $\mathcal{C}$, $  f(\z;\c) = \det(\D(\c))$ is a nonzero polynomial, namely there exists    $\z \in \mathbb{C}^P$ such that $  f(\z;\c) \neq 0$.
    \end{mylemma}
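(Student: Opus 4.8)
The plan is to regard $f(\z;\c)=\det(\D(\c))$ as a polynomial in the $P$ scalars $z[0],\ldots,z[P-1]$ via the Leibniz expansion, and to exhibit a single monomial whose coefficient can be shown to be nonzero; this certifies $f(\z;\c)\not\equiv 0$, hence the existence of a $\z\in\mathbb{C}^P$ with $f(\z;\c)\neq 0$. Write the row index of $\D(\c)$ as $m_a=a+Q-1$, $a=0,\ldots,u-1$, so that the $(a,v)$ entry is $W_P^{m_a p_v}\,z[m_a-q_v]$; then each permutation $\pi$ of $\{0,\ldots,u-1\}$ contributes the monomial $\mu(\pi)=\prod_{a=0}^{u-1}z[m_a-q_{\pi(a)}]$ with coefficient $\mathrm{sgn}(\pi)\prod_{a}W_P^{m_a p_{\pi(a)}}$. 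I will target the ``diagonal'' monomial $\mu^{\star}=\mu(\mathrm{id})=\prod_{a=0}^{u-1}z[m_a-q_a]$. Since $0\le q_a\le Q-1$ and $0\le a\le u-1=P-Q$, every index $m_a-q_a=a+Q-1-q_a$ lies in $[0,P-1]$, so $\mu^{\star}$ is a legitimate monomial in the variables $z[\cdot]$.

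The combinatorial heart of the argument is to determine exactly which permutations $\pi$ contribute $\mu^{\star}$. Because $q_0\le q_1\le\cdots\le q_{u-1}$, the column indices that share a common ambiguity order $g$ form a contiguous block $V_g\subseteq\{0,\ldots,u-1\}$, and within each block the associated $p$-values are strictly increasing (this is precisely the constraint $c_0<c_1<\cdots<c_{u-1}$). Consider the weight $w(\mu)=\sum_j j^2\cdot(\text{multiplicity of }z[j]\text{ in }\mu)$. A short computation gives $w(\mu(\pi))=\text{const}-2\sum_a m_a\,q_{\pi(a)}$, and since the $m_a$ are strictly increasing while the $q$-values are non-decreasing, the rearrangement inequality forces $\sum_a m_a q_{\pi(a)}$ to be maximal — equivalently $w(\mu(\pi))$ minimal — exactly when $q_{\pi(\cdot)}$ is non-decreasing, i.e. exactly when $\pi$ maps every block $V_g$ onto itself. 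Hence $\mu(\pi)=\mu^{\star}$ forces $\pi$ to be such a block permutation, and conversely every block permutation satisfies $q_{\pi(a)}=q_a$ for all $a$ and therefore realizes $\mu^{\star}$.

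It remains to evaluate the coefficient of $\mu^{\star}$. For a block permutation, the sign and the factor $\prod_a W_P^{m_a p_{\pi(a)}}$ both split over the blocks, so the coefficient of $\mu^{\star}$ equals
\[
\prod_{g}\ \det\!\Big(\big[\,W_P^{m_a p_v}\,\big]_{a\in V_g,\ v\in V_g}\Big).
\]
For each $g$ the row exponents $\{m_a:a\in V_g\}$ are consecutive integers and the nodes $W_P^{p_v}$, $v\in V_g$, are distinct $P$-th roots of unity (the $p_v$ being distinct elements of $[0,P-1]$), so every factor is a generalized Vandermonde determinant and is nonzero. Therefore the coefficient of $\mu^{\star}$ is nonzero, $f(\z;\c)$ is a nonzero polynomial, and the lemma follows.

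The step I expect to be the main obstacle is this identification of the permutations that produce $\mu^{\star}$. A tempting shortcut — substituting $z[k]=\zeta^{k}$ — turns $\D(\c)$ into a Vandermonde matrix whose nodes $W_P^{p_v}\zeta$ collide whenever two selected columns share the same $p$ but differ in ambiguity order, so it gives a vanishing determinant and no information. One therefore genuinely must isolate a monomial and rule out cancellation among the several permutations that realize it; the rearrangement-inequality bookkeeping pinning these down to exactly the block permutations is the delicate point, after which the Vandermonde nonvanishing is routine.
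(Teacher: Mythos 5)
Your proof is correct, but it takes a genuinely different route from the paper's. The paper splits into the case where all ambiguity orders $q_0,\ldots,q_{u-1}$ coincide (a Vandermonde reduction) and the case where they do not, handling the latter by induction on $Q$: it locates the first block break at index $t$, extracts from the Leibniz expansion the sub-sum $f_1(\z;\c)$ of monomials containing all of $z[P-1-q_0],\ldots,z[P-t-q_0]$, factors it as $\det(\G(\c))\det(\E(\c))$ with $\G$ a one-block Vandermonde-type matrix and $\E$ a smaller matrix of the same structure covered by the induction hypothesis, and then argues that $f_1$ cannot be cancelled by the remaining terms. You instead isolate a single monomial, the diagonal $\mu^\star=\prod_a z[m_a-q_a]$, and use the weight $\sum_a(m_a-q_{\pi(a)})^2$ together with the rearrangement inequality (valid here because the $m_a$ are strictly increasing and the $q_v$ non-decreasing) to show that exactly the block-preserving permutations contribute to it, so its coefficient factors into ordinary Vandermonde determinants over the blocks. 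This eliminates both the induction and the case split (the all-equal case is the one-block instance) and replaces the paper's non-cancellation bookkeeping with a cleaner extremality argument; what the paper's version buys in exchange is that it never needs the extremal identification, only the coarser observation about which variables can appear in which Leibniz terms. Both arguments ultimately rest on the same structural facts: the contiguous blocks of equal $q_v$ with strictly increasing $p_v$ inside each block, and Vandermonde nonvanishing for distinct $P$-th roots of unity.
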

    \begin{proof}
        We note that the $ v$-th columns of $\D(\c)$ is selected from the  $q_v$-th block of $\overline{\B}$, for $v = 0,\ldots,u-1$.
        We  first prove the proposition under the condition that all columns of $\D(\c)$ are from the same block in $\overline{\B}$, and then prove the proposition under the opposite condition.
        Correspondingly, the proof is divided into the two following situations:
        \begin{enumerate}[(a)]
            \item $q_0 = q_1 = \ldots = q_{u - 1}$;
            \item $q_0,q_1,\ldots,q_{u-1}$ are not all the same.
        \end{enumerate}
        
        \emph{Proof for situation (a)}: 
        In this case, the  $(b-Q+1,v)$-th entry of $\D(\c)$ is $ W _ {P} ^ {b p_v } z[ b-q_0 ]$ and $p_0 < p_1 < \cdots < p_{u-1}$. 
       To calculate $f(\z; \c)$, the determinant of $\D(\c)$, we apply the following operations to $\D(\c)$: First divide the $(b-Q+1)$-th row by $ z[b-q_0]$ and then divide the $v$-th column by $ W_{P}^{p_v(Q-1)} $, for $b = Q-1, \ldots, P-1$ and $ v = 0, \ldots, u-1$. 
        These operations result in a new matrix, denoted by $\hat{\D} (\c)$, whose $(b',v)$-th entry is given by $[\hat{\D} (\c)]_{b',v} = W_p^{b' p_v}$, for $b'=0,\ldots,u-1$ and $v = 0,\ldots,u-1$.
       According to the property of determinant \cite{linear_algebra}, one has 
        \begin{equation}
           f(\z;\c) =   \det \big(\hat{\D} (\c) \big) \prod_{b = Q-1}^{P-1} z[b-q_0] \prod _ {v=0} ^ {u-1}  W_{P}^{p_v(Q-1)}.
       \end{equation}
        We note that $\hat{\D}(\c) $ is a Vandermonde matrix and its determinant is not zero since its bases $W_{P}^{p_0},\ldots,W_{P}^{p_{u-1}}$ are distinct \cite{alexeev_full_2012}.
        Therefore, $f(\z;\c)$ is certainly a nonzero polynomial. 
        
        \emph{Proof for situation (b)}: Since the values of $q$ are not identical, we have $Q \geq 2$. The proposition can be proved by mathematical induction for situation (b). 
        
        First, we check the proposition when $\D(\c) \in \mathbb{C}^{u \times u}$ has a low dimension, say, $u = 2$. In this case, one has $Q = P-1$, $\c = (c_0, c_1)$, $  P-2 \leq b \leq P-1 $, and we let $q_0 < q_1$. Then, $\D(\c)$ is written as 
        \begin{displaymath}
            \D(\c) = \left[
            \begin{array}{cc}
                W_P^{(P-2) p_0} z[P-2-q_0] & W_P^{(P-2) p_1} z[P-2-q_1] \\ 
                W_P^{(P-1) p_0} z[P-1-q_0] &  W_P^{(P-1) p_1} z[P-1-q_1]
            \end{array} 
            \right],
        \end{displaymath} 
        with its determinant  $f(\z;\c)$ given by 
        \begin{displaymath}
            \begin{aligned}
                f(\z;\c) &= W_P^{(P-2) p_0} W_P^{(P-1) p_1} z[P-1-q_1]  z[P-2-q_0] \\ &- W_P^{(P-1) p_0}   W_P^{(P-2) p_1} z[ P-2-q_1 ] z[ P-1-q_0 ].
            \end{aligned}
        \end{displaymath} 
    It can easily verified that $f(\z;\c)$ is a non-zero polynomial when $q_0 < q_1$. 
        
        Next, suppose that the proposition holds for $u = 2, 3,\ldots, u' -1$, or equivalently $ Q= P-1,\ldots, Q'+1$ , where $Q' = P - u' + 1$ and $2 \leq Q' \leq P-2$.
        We need to prove that it holds for $u = u'$ or $Q = Q'$.
        To this aim, we recall that $ q_ 0 \leq q_1 \cdots \leq q_{ u ' -1} $ because $c_ 0 < c_1 < \cdots < c_{u'-1}$.
        Since $q_0,q_1,\ldots,q_{u'-1}$ are not all the same, there exists an integer $t$ satisfying that $q_0 = \cdots = q_{t-1}$ and $q_{t-1} < q_{t}$, where $1 \leq t \leq u'-1$. 

       With the Leibniz formula  \cite{linear_algebra}, the determinant of $\D(\c)$ is expressed as
        \begin{equation} \label{det1}
            \begin{aligned}
                f(\z;\c) &= \sum_{\v \in \mathcal{S}} \mathrm{sgn} (\v) \prod _ {i=0} ^ {u'-1}  \D_{i, \v_i} (\c) \\
                &=\sum_{\v \in \mathcal{S}} \mathrm{sgn} (\v) \prod _ {i=0} ^ {P-Q'}  W_{P} ^ {(i+Q'-1)p_{\v_i}}  z [i + Q' - 1 - q_{\v_i}],
            \end{aligned}
        \end{equation}
        where the sum is computed over all the permutations $\v$ for $\{ 0, 1,\ldots, P-Q'\}$, and $\mathcal{S}$ is the set consisting of all such permutations.
        For each permutation $\v$, $\mathrm{sgn} (\v) $ is the sign of $\v$.
        If $\v$ can be obtained by interchanging two elements in $(0,1,\ldots,P-Q')$ for an even number of times, $\mathrm{sgn} (\v) = 1$. Otherwise,  $\mathrm{sgn} (\v) = -1$.
        
        We note that $f(\z;\c)$ is expressed as the sum of several monomials.
        These monomials can be divided into two groups, according to whether the monomial includes the variables $ z[P-1-q_0], \ldots,  z[P-t-q_0] $.
        We denote the sum of the monomials which include $ z[P-1-q_0], \ldots,  z[P-t-q_0] $ by $f_1(\z;\c)$, and denote the  sum of the monomials which do not include them or only include part of them by $f_2(\z;\c)$.
        It is clear that $f(\z;\c) = f_1(\z;\c) + f_2(\z;\c)$.
        Hereinafter, we show that $f_1(\z;\c)$ is a nonzero polynomial.
        If  $f_1(\z;\c)$ is a nonzero polynomial, $ f(\z;\c)$ should be a nonzero polynomial.
        Otherwise, one has $f_2(\z;\c) = -  f_1(\z;\c)$, indicating that the monomials in $f_2(\z;\c)$ include $ z[P-1-q_0], \ldots,  z[P-t-q_0] $, which is opposite to the definition of $f_2(\z;\c)$.

        
        For $ i < u ' -t $, it holds that
        \begin{equation}
            i + Q' - 1 - q_{\bm v_i} < u ' -t + Q'-1 - q_0 = P-t-q_0.
        \end{equation}
        Therefore, $ \prod _ {i=0} ^ {u'-t-1}  \D_{i, \v_i} (\c)$ does not include $ z[P-1-q_0], \ldots,  z[P-t-q_0] $.
        For $\v \in \mathcal{S}$, if $ \prod _ {i=0} ^ {u'-1}  \D_{i, \v_i} (\c)$ includes $ z[P-1-q_0], \ldots,  z[P-t-q_0] $, $ \prod _ {i=u'-t} ^ {u'-1}  \D_{i, \v_i} (\c)$ should include $ z[P-1-q_0], \ldots,  z[P-t-q_0] $, indicating that
        \begin{equation}
            i + Q' - 1 - q_{\bm v_i} = i + Q' - 1 - q_{0} \ \Rightarrow \ q_{\bm v_i} =  q_{0},
        \end{equation}  
        and further
        \begin{equation} \label{eq-vi}
            0 \leq \bm v_i \leq t-1,
        \end{equation}
        for $\ u' - t \leq i \leq u ' -1$.
        
        %
        Let $ \v = [\v^{(1)},\v^{(2)}] $, where $ \v^{(1)} =  ( \v_{0},\ldots,\v_{u'-t-1} )  $ and $ \v^{(2)} =  ( \v_{u'-t},\ldots,\v_{u'-1} ) $ .
        From  \eqref{eq-vi}, we have $ \v^{(1)}  \in \mathcal{S}_1 $  and $ \v^{(2)}  \in \mathcal{S}_2 $, where 	 $\mathcal{S}_1$ is the set of all the permutations of $\{ t, \ldots, u'-1\}$, and  $\mathcal{S}_2$ is the set of all the permutations of  $\{ 0, \ldots, t-1\}$.
        Then $f_1(\z;\c)$ can be expressed as 
        \begin{subequations}
            \begin{align}
                &	f_1(\z;\c) \notag \\
                =	&	\sum_{\v^{(1)} \in \mathcal{S}_1} \sum_{\v^{(2)} \in \mathcal{S}_2} \mathrm{sgn} (\v ) \prod _ {i=0} ^ {u'-1}  \D_{i,  \v _ i} (\c) \\
                =	& \sum_{\v^{(1)} \in \mathcal{S}_1} \sum_{\v^{(2)} \in \mathcal{S}_2} \mathrm{sgn} (\v ) \prod _ {i_1=0} ^ {u'-t - 1} \prod _ {i_2=u'-t} ^ {u'-1}   \D_{i_1,  \v _ {i_1}} (\c)  \D_{i_2,  \v _ {i_2}} (\c) \\
                =	&	\sum_{\v^{(1)} \in \mathcal{S}_1} \sum_{\v^{(2)} \in \mathcal{S}_2} \mathrm{sgn} (\v^{(1)})  \mathrm{sgn} (\v^{(2)}) \notag \\ 
                & \times \prod _ {i_1=0} ^ {u'-t-1} \prod _ {i_2=u'-t} ^ {u'-1}  \D_{i_1, \v^{(1)}_{i_1}} (\c) \D_{i_2, \v^{(2)}_{i_2+t-u'}} (\c) \label{eq-hg-3} \\
                =	&	\sum_{\v^{(1)} \in \mathcal{S}_1} \sum_{\v^{(2)} \in \mathcal{S}_2} \mathrm{sgn} (\v^{(1)})  \mathrm{sgn} (\v^{(2)}) \notag \\
                &  \times \prod _ {i_1=0} ^ {u'-t-1} \prod _ {i_2=0} ^ {t-1}  \D_{i_1, \v^{(1)}_{i_1}} (\c) \D_{i_2 + u' -t , \v^{(2)}_{i_2}} (\c). \label{eq-hg-4}
            \end{align}
        \end{subequations}
        Here, \eqref{eq-hg-3} comes from the fact that
        \begin{displaymath}
            \begin{aligned}
                & \mathrm{sgn} (\v ) = \mathrm{sgn} (\v^{(1)})  \mathrm{sgn} (\v^{(2)}), \\
                & \v _ {i_1} = \v^{(1)}_{i_1}, \ 0 \leq i_1 \leq u'-t-1, \\
                & \v _ {i_2} = \v^{(2)}_{i_2+t-u'}, \ u'-t \leq i_2 \leq u'-1,
            \end{aligned}
        \end{displaymath} 
        and \eqref{eq-hg-4} is obtained via replacing $i_2$ with $i_2 + u' - t$.

        To further explore the property of $h_1(\z;\c) $,  we  express $\D(\c)$ in block matrix form
        \begin{equation}
            \D(\c) = \left[
            \begin{array}{cc}
                \vdots  &  \E(\c)  \\ 
                \G(\c) &  \cdots
            \end{array}       
            \right],
        \end{equation}
        where $\G(\c)$ is a $t \times t$ matrix and  $\E(\c)$ is a $(u'-t) \times (u'-t)$ matrix. 
        According to the relationship that 
        \begin{displaymath}
            \begin{aligned}
                &	\D_{i_1, \v^{(1)}_{i_1}}(\c) = \E_{i_1, \v^{(1)}_{i_1}-t} (\c), \ 0 \leq i_1 \leq u'-t-1, \\
                &	\D_{i_2 + u'-t, \v^{(2)}_{i_2}}(\c) = \G_{i_2, \v^{(2)}_{i_2}} (\c), \ 0 \leq i_2 \leq t-1,
            \end{aligned}
        \end{displaymath}		
        we have
        \begin{subequations}
            \begin{align}
                f_1(\z;\c)
                =&  \sum_{\v^{(1)} \in \mathcal{S}_1} \sum_{\v^{(2)} \in \mathcal{S}_2} \mathrm{sgn} (\v^{(1)}) \mathrm{sgn} (\v^{(2)}) \notag \\ & \times \prod _ {i_1=0} ^ {u'-t-1} \prod _ {i_2=0} ^ {t-1}  \E_{i_1, \v^{(1)}_{i_1}-t} (\c) \G_{i_2, \v^{(2)}_{i_2}} (\c)\\ 
                =&  \ \bigg(  \sum_{\v^{(1)} \in \mathcal{S}_1}  \mathrm{sgn} (\v^{(1)}) \prod _ {i_1=0} ^ {u'-t-1} \E_{i_1, \v^{(1)}_{i_1}-t} (\c)\bigg) \notag \\
                &\times  \bigg(  \sum_{\v^{(2)} \in \mathcal{S}_2}  \mathrm{sgn} (\v^{(2)})\prod _ {i_2=0} ^ {t-1}\G_{i_2, \v^{(2)}_{i_2}} (\c)\bigg) \\
                =&  \det(\G(\c)) \det(\E(\c)).
            \end{align}
        \end{subequations}
        
        \color{black}
         The matrix $\E(\c)$ has a similar expression with $\D(\c)$ and the $(b-Q'-t+1, v)$-th entry of $\E(\c)$ is $W _ {P} ^ {(b-t) p_{v+t} } z[ b-t-q_{v+t} ]$, for $ b = Q'+t-1, \ldots, P-1$ and $ v = 0, \ldots, u'-t-1$.
        Let $\hat{q}_v = t + q_{v+t}$ and $\hat{p}_v = p_{v+t} $.
        Multiply the $v$-th column of $\E(\c)$ by $W_P ^ {t \hat{p} _ {v}} $, for $v = 0 ,\ldots, u'-t-1$, and the result is a new matrix $\hat{\E} (\c)$ whose $(b-(Q'+t)+1, v)$-th entry is given by $W _ {P} ^ {b \hat{p}_{v} } z[ b-\hat{q}_{v} ]$, for $ b = (Q'+t)-1, \ldots,  P-1$ and $ v = 0 ,\ldots, u'-t-1$.
        One can observe that $\hat{\E} (\c)$ has a consistent expression with $\D(\c)$ while its dimension is less than $\D(\c)$.
        Using the induction hypothesis for $Q = Q'+t$, the determinant of $\hat{\E}(\c)$ is a nonzero polynomial of $\z$.
         \color{black}
        The $(b,v)$-th entry of $\G(\c)$ is given by $W_{P} ^ {-b p_v} z[b - q_0]$, for $b = P-t ,\ldots, P - 1$ and $ v = 0,\ldots, t-1$.
        It is clear that $\G(\c)$ has the same structure as $\D(\c)$ in the proof for situation (a).
        Similarly, one can prove that $\det(\G(\c))$ is a nonzero polynomial.
        Since $\det(\G(\c))$ and $ \det(E(\c))$ are nonzero polynomials, $f_1(\z;\c)$ is a  nonzero polynomial, and  $f(\z;\c)$ is also a  nonzero polynomial.


        Now that we have proved that the proposition holds for $Q=Q'$.
        With mathematical induction, the proposition holds for all $Q = 2 ,\ldots, P-1$ under situation (b).
        In conclusion, $f(\z;\c)$ is a nonzero polynomial under both situations (a) and (b), and the proof is complete. 
    \end{proof}

    Lemma \ref{lemma::B} shows that $f(\z;\c) $ is a nonzero polynomial for all $ \c \in \mathcal{C}$.
    Therefore, $F(\z) = \Pi_{\c \in \mathcal{C}} f(\z;\c) $ is a nonzero polynomial.
    We  use the following lemma from \cite{8494717} to prove that $F(\z) \neq 0$ with probability one, which points out the fact that the Harr measure of the set composed of zeros for a nonzero polynomial is zero.

    \begin{mylemma} \label{Lemma::A}
        Let $F(z_0,\ldots,z_{P-1})$ be a nonzero complex polynomial with $P$ variables.
        Define the set of zeros of $F$ 
        \begin{equation} \label{eq-Nf}
            \mathcal{N}_F = \left\{ (z_0,\ldots,z_{P-1}) \in \mathbb{C}^P \ \middle| \ F(z_0,\ldots,z_{P-1}) = 0 \right\}.
        \end{equation}
        Define the $P$-torus
        \begin{equation} \label{eq-appendix-torus}
            \mathcal{T}^P = \underbrace{ \mathcal{T}^1  \times \cdots \times \mathcal{T}^1}_{P} , \ \mathcal{T}^1 = \left\{ z \in \mathbb{C} \ \middle| \ |z| = 1 \right\},
        \end{equation}
        where $\times$ denotes the Cartesian product.
        Let $\sigma^P$ be the Harr measure on $\mathcal{T}^P$.
        Then one has
        \begin{equation}
            \sigma^P (\mathcal{T}^P \cap \mathcal{N}_F ) = 0.
        \end{equation}
        
    \end{mylemma}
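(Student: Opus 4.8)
The plan is to prove the statement by induction on the number of variables $P$, exploiting the fact that the Haar measure $\sigma^P$ on $\mathcal{T}^P$ factors as the product of the one-dimensional arc-length measures $\sigma^1$ on the circle factors, so that Fubini--Tonelli applies. The content of the lemma is that, although $\mathcal{N}_F$ sits inside $\mathbb{C}^P$ as a set of real codimension two in general, it still meets the lower-dimensional torus $\mathcal{T}^P$ in a $\sigma^P$-null set; the product structure of the torus is exactly what makes an inductive slicing argument go through.

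For the base case $P = 1$, a nonzero polynomial $F(z_0)$ has finitely many roots, so $\mathcal{T}^1 \cap \mathcal{N}_F$ is finite and therefore $\sigma^1$-null. For the inductive step, assuming the claim for $P-1$ variables, I would write $F$ as a polynomial in the last coordinate,
\begin{equation}
F(z_0,\ldots,z_{P-1}) = \sum_{k=0}^{d} a_k(z_0,\ldots,z_{P-2}) \, z_{P-1}^{k},
\end{equation}
with each $a_k$ a polynomial in the first $P-1$ variables, and pick an index $k_0$ with $a_{k_0} \not\equiv 0$, which exists because $F \not\equiv 0$. Since $\mathcal{N}_F$ is the zero set of a continuous function, it is closed and hence Borel, so Tonelli's theorem applied to the indicator $\mathbf{1}_{\mathcal{T}^P \cap \mathcal{N}_F}$ gives
\begin{equation}
\sigma^P(\mathcal{T}^P \cap \mathcal{N}_F) = \int_{\mathcal{T}^{P-1}} \sigma^1\big(\{ z_{P-1} \in \mathcal{T}^1 : F(w,z_{P-1}) = 0 \}\big) \, d\sigma^{P-1}(w),
\end{equation}
writing $w = (z_0,\ldots,z_{P-2})$.

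Then I would split the outer integral over $E = \{ w \in \mathcal{T}^{P-1} : a_{k_0}(w) = 0\}$ and its complement. The induction hypothesis forces $\sigma^{P-1}(E) = 0$, so the integral over $E$ vanishes. For $w \notin E$ the univariate polynomial $z_{P-1} \mapsto F(w,z_{P-1})$ has $a_{k_0}(w) \neq 0$ as one of its coefficients, hence is not identically zero, so its root set on $\mathcal{T}^1$ is $\sigma^1$-null by the base case. Thus the integrand vanishes $\sigma^{P-1}$-almost everywhere, the whole integral is zero, and the induction closes.

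I do not expect a genuine obstacle here: the only point requiring a moment's care is the measurability needed to invoke Tonelli, but that is immediate because $\mathcal{N}_F$ is closed, and the factorization of $\sigma^P$ as $\sigma^1 \otimes \cdots \otimes \sigma^1$ is standard for products of compact abelian groups. Once $F$ is organized by degree in one variable, the argument is essentially a clean application of Fubini and everything else is bookkeeping.
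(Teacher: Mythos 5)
Your proof is correct. Note, however, that the paper itself offers no proof of this lemma: it is imported verbatim from \cite{8494717} ("We use the following lemma from [34]$\ldots$"), so there is no in-paper argument to compare against. Your induction-on-$P$ with Fubini--Tonelli slicing is the standard way to establish this fact, and every step checks out: the base case is the finiteness of the root set of a nonzero univariate polynomial; the factorization $\sigma^P = \sigma^1 \otimes \cdots \otimes \sigma^1$ follows from uniqueness of Haar measure on a product of compact groups; $\mathcal{T}^P \cap \mathcal{N}_F$ is compact, and since $\mathcal{T}^1$ is second countable the Borel $\sigma$-algebra of the product coincides with the product of the Borel $\sigma$-algebras, so Tonelli applies to the indicator; the choice of a nonvanishing leading-in-$z_{P-1}$ coefficient $a_{k_0}$ lets you control the exceptional slice set $E$ by the inductive hypothesis, and off $E$ each slice polynomial is nonzero and hence has a null root set on the circle. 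The whole argument is self-contained and arguably an improvement on the paper, which leaves the reader to chase the citation; if anything, you could remark that the same slicing shows the stronger statement that $\mathcal{N}_F$ meets $\mathcal{T}^P$ in a set that is null for the Haar measure of \emph{every} coordinate subtorus, but that is not needed here.
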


    Define the map $\Theta : [0,2 \pi)^P \rightarrow \mathcal{T}^P$ by $\Theta(\phi_0,\ldots,\phi_{P-1}) = \left( e^{-j \phi_0},\ldots, e^{-j \phi_{P-1}} \right)$.
    The map $ \Theta $  is bijective and absolutely continuous.
    Let $\mu ^ P$ be the Harr measure on $ [0,2 \pi) ^ P $.
    Because $ \sigma ^ P (\mathcal{T} ^ P \cap \mathcal{N}_F) = 0$, one has
    \begin{equation} \label{equation::mesaure::2}
        \mu ^ P ( \Theta ^ {-1} ( \mathcal{T}^{P} \cap \mathcal{N}_F ) )  = 0.
    \end{equation}
    If $\phi[0], \ldots, \phi[P-1]$ are independent and uniformly distributed in $[0, 2 \pi)$, the probability of $F(\z) = 0$ is
    \begin{equation}
        \frac{   \mu ^ P ( \Theta ^ {-1} ( \mathcal{T}^{P} \cap \mathcal{N}_F ) ) }{\mu ^ P ( [0, 2\pi )^P)} = \frac{   \mu ^ P ( \Theta ^ {-1} ( \mathcal{T}^{P} \cap \mathcal{N}_F ) ) }{ (2 \pi)^P} = 0.
    \end{equation}
    In other words, $F(\z) \neq 0$ with probability one.
    Further, any $P-Q+1$ columns of $\B$ are linearly independent with probability one, completing the proof.

    \bibliographystyle{IEEEtran}
    \bibliography{refs}

\end{document}